\newtheorem{theorem}{\noindent{\it Theorem}}[section]
\newtheorem{lemma}[theorem]{\noindent{\it Lemma}}
\newtheorem{remark}[theorem]{\noindent{\it Remark}}
\newtheorem{corollary}[theorem]{\noindent{\it Corollary}}
\newenvironment{proof}{\noindent{\it Proof:}}{$\hfill$ $\Box$\\ }
\begin{document}

\title{Good and asymptotically good quantum codes derived from algebraic geometry}

\author{Giuliano G. La Guardia, Francisco Revson F. Pereira
\thanks{Giuliano Gadioli La Guardia (corresponding author) is with Department of Mathematics and Statistics,
State University of Ponta Grossa (UEPG), 84030-900, Ponta Grossa,
PR, Brazil, e-mail: (gguardia@uepg.br). Francisco Revson F. Pereira
is with Department of Electrical Engineering, Federal University of
Campina Grande (UFCG), 58429-900, Campina Grande, PB, Brazil,
e-mail: (francisco.pereira@ee.ufcg.edu.br).}}

\maketitle

\begin{abstract}
In this paper, we construct several new families of quantum codes
with good parameters. These new quantum codes are derived from
(classical) $t$-point ($t\geq 1$) algebraic geometry (AG) codes by
applying the Calderbank-Shor-Steane (CSS) construction. More
precisely, we construct two classical AG codes $C_1$ and $C_2$ such
that $C_1\subset C_2$, applying after the well-known CSS
construction to $C_1$ and $C_2$. Many of these new codes have large
minimum distances when compared with their code lengths as well as
they also have small Singleton defects. As an example, we construct
a family ${[[46, 2(t_2 - t_1), d]]}_{25}$ of quantum codes, where
$t_1 , t_2$ are positive integers such that $1 <t_1 < t_2 < 23$ and
$d\geq \min \{ 46 - 2t_2 , 2t_1 - 2 \}$, of length $n=46$, with
minimum distance in the range $2\leq d\leq 20$, having Singleton
defect at most four. Additionally, by applying the CSS construction
to sequences of $t$-point (classical) AG codes constructed in this
paper, we generate sequences of asymptotically good quantum codes.
\end{abstract}

\section{Introduction}

Methods and techniques of constructing quantum codes with good
parameters are extensively investigated in the literature
\cite{steane:1996,calderbank:1998,steane:1999,chen:2001,ling:2001,ketkar:2006,aly:2007,laguardia:2009,laguardia:2011,jin:2012,laguardia:2012,laguardia:2013,laguardia:2014,jin:2014,munuera:2016}.
Many of these works
\cite{calderbank:1998,ketkar:2006,aly:2007,laguardia:2011,jin:2012,laguardia:2014}
were performed by applying one (or all) of the following techniques:
1) the well-known CSS construction based on (classical linear)
Euclidean self-orthogonal codes or even based on two (classical
linear) nested codes
\cite{calderbank:1998,ketkar:2006,aly:2007,jin:2012,laguardia:2014};
2) the Hermitian construction applied to (classical linear)
Hermitian self-orthogonal codes
\cite{calderbank:1998,ketkar:2006,aly:2007,laguardia:2011,jin:2012,laguardia:2014,jin:2014};
3) the Steane's enlargement of CSS construction applied to linear
Euclidean self-orthogonal codes
\cite{steane:1996,steane:1999,laguardia:2009,laguardia:2014}. In
particular, the CSS construction was also utilized in chains of
nested classical linear codes to construct quantum codes whose
parameters are asymptotically good
\cite{ashikhmin:2001,chen:2001,ling:2001,matsumoto:2002,walker:2008}.
All these latter asymptotically good quantum codes were constructed
by applying the CSS construction to families of AG codes. In fact,
the class of AG codes is a good source to obtain asymptotically good
codes (see for example \cite{henning:1996,henning:2005}). In Refs.
\cite{ashikhmin:2001,chen:2001,ling:2001,matsumoto:2002}, the
authors constructed asymptotically good binary quantum codes and, in
Ref. \cite{walker:2008}, the authors presented families of nonbinary
asymptotically good quantum codes by means of one-point AG codes.

In this paper, we construct (classical) $t$-point ($t \geq 1$) AG
codes (which are a generalization of one-point AG codes) as well as
AG codes whose divisor $G$ is not a rational place, after applying
the CSS construction to these codes, in order to generate nonbinary
quantum codes with good parameters. Additionally, we also construct
sequences of (classical) $t$-point AG codes to obtain sequences of
asymptotically good quantum codes by means of the CSS construction.
These new constructions presented here are natural generalizations
of the works dealing with constructions of quantum codes derived
from one-point AG codes (see for example
\cite{ashikhmin:2001,chen:2001,ling:2001,matsumoto:2002}).

The paper is arranged as follows. In Section~\ref{sec2}, we recall
the concepts utilized in this work. Section~\ref{sec3} deals with
the contributions of this paper, i.e., constructions of quantum
codes with good and asymptotically good parameters derived from
classical AG codes. In Section~\ref{sec4}, we compare the new code
parameters with the ones shown in the literature and, in
Section~\ref{sec5}, we give a summary of this work.

\section{Preliminaries}\label{sec2}

\subsection{Algebraic geometry codes}\label{sec2.1}

In this subsection we recall the concept of algebraic geometry codes
as well as results that will be utilized in our constructions. Since
this topic of research is not so common, we present it with more
details. More results concerning such codes can be found in
\cite{stichtenoth:2009,niederreiter:2009}. We follow the notation of
\cite{stichtenoth:2009}.

The theory of algebraic geometry codes was introduced by Goppa in
his seminal work \cite{goppa:1977}. Let ${\mathbb F}_{q}$ be the
finite field with $q$ elements, where $q$ is a prime power and let
$F /{\mathbb F}_{q}$ be an algebraic function field of genus $g$. We
denote by ${\mathcal P}_{F}$ the set of places of $F /{\mathbb
F}_{q}$ and by ${\mathcal D}_{F}$ the (free) group of divisors of $F
/{\mathbb F}_{q}$. For each $ x \in F /{\mathbb F}_{q}$, the
principal divisor $(x)$ of $x$ is defined by
$(x):=\displaystyle\sum_{P}v_{P}(x) P$, where $v_{P}$ is the
discrete valuation corresponding to the place $P$. Let $A$ be a
divisor of $F /{\mathbb F}_{q}$. Then we define $l(A):= \dim
{\mathcal L}(A)$, where ${\mathcal L}(A)$ is the Riemann-Roch space
associated to $A$, given by ${\mathcal L}(A):= \{ x \in F \ | \
(x)\geq -A \} \cup \{ 0 \}$. Let ${\Omega}_{F}$ be the differential
space of $F /{\mathbb F}_{q}$, i.e., ${\Omega}_{F}:= \{ w \ | \ w \
is \ a \ Weil \ differential \ of \ F /{\mathbb F}_{q} \}$. For
every nonzero differential $w$, we denote its canonical divisor by
$(w):= \displaystyle\sum_{P}v_{P}(w) P$, where
$v_{P}(w):=v_{P}((w))$. A divisor $W$ is called canonical if $W=(w)$
for some $w \in {\Omega}_{F}$.

\begin{theorem}(Riemann-Roch)(Thm. 1.5.15 of \cite{stichtenoth:2009})
Let $W$ be a canonical divisor of $F /{\mathbb F}_{q}$. Then, for
each divisor $A \in {\mathcal D}_{F}$, the dimension of $ {\mathcal
L}(A)$ is given by $l(A) = \deg A +1 - g + \dim (W - A)$.
\end{theorem}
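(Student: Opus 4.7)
The plan is to follow the classical adele-theoretic route, as in Stichtenoth's textbook. The strategy has three stages: first establish Riemann's inequality to introduce the genus and the index of specialty, then identify this index with the dimension of a space of differentials via adele duality, and finally relate that space to $\mathcal{L}(W-A)$ by multiplication by a fixed Weil differential.

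First I would prove Riemann's theorem: the integer $\deg A + 1 - l(A)$ is bounded above on $\mathcal{D}_F$, and its maximum $g$ is, by definition, the genus. This immediately yields $l(A) \geq \deg A + 1 - g$, with equality whenever $\deg A$ is sufficiently large. The defect $i(A) := l(A) - \deg A - 1 + g \geq 0$ is the \emph{index of specialty}, so the theorem reduces to the assertion $i(A) = l(W-A)$. To compute $i(A)$, I would pass to the adele (repartition) ring $\mathcal{A}_F$ together with the subgroups $\mathcal{A}_F(A) := \{ \alpha \in \mathcal{A}_F \mid v_P(\alpha_P) \geq -v_P(A) \text{ for all } P \}$, and show, using the Strong Approximation Theorem, that $\mathcal{A}_F/(\mathcal{A}_F(A) + F)$ is $\mathbb{F}_q$-finite-dimensional with dimension exactly $i(A)$. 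Since a Weil differential is by definition an $\mathbb{F}_q$-linear functional on $\mathcal{A}_F$ that annihilates $\mathcal{A}_F(B) + F$ for some $B$, the subspace $\Omega_F(A) := \{ w \in \Omega_F \mid (w) \geq A\} \cup \{0\}$ is exactly the dual of that quotient, giving $\dim_{\mathbb{F}_q} \Omega_F(A) = i(A)$.

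For the last step, I would fix a canonical divisor $W = (w_0)$ with $w_0 \in \Omega_F$ nonzero, and verify that the map $\phi \colon \mathcal{L}(W - A) \to \Omega_F(A)$ defined by $\phi(x) := x w_0$ is an $\mathbb{F}_q$-linear bijection. Indeed, for $x \in F^{*}$ the identity $(x w_0) = (x) + W$ shows $(x w_0) \geq A$ iff $(x) \geq A - W$, i.e.\ $x \in \mathcal{L}(W - A)$; linearity is immediate, and injectivity follows from $\Omega_F$ being a one-dimensional $F$-module. Combining the two equalities produces $l(W - A) = \dim \Omega_F(A) = i(A) = l(A) - \deg A - 1 + g$, which is exactly the stated formula.

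The main obstacle is the middle step, namely the adele-theoretic identification $i(A) = \dim \Omega_F(A)$. This requires developing enough infrastructure (repartitions, the Approximation Theorem, and the one-dimensionality of $\Omega_F$ as an $F$-module) both to bound $\dim \mathcal{A}_F/(\mathcal{A}_F(A) + F)$ by $i(A)$ and to realize this bound. Since Thm.~1.5.15 of \cite{stichtenoth:2009} is precisely this result, in practice one simply invokes the reference; the sketch above is the path one would follow to reprove it from scratch.
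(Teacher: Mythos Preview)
The paper does not prove this statement at all: the Riemann--Roch theorem appears in the preliminaries section purely as a quoted result, with the attribution ``(Thm.~1.5.15 of \cite{stichtenoth:2009})'' and no accompanying argument. So there is no in-paper proof to compare your proposal against.

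That said, your sketch is correct and is essentially the route taken in the cited reference itself: Riemann's inequality to define $g$ and the index of specialty $i(A)$, the adele/repartition quotient $\mathcal{A}_F/(\mathcal{A}_F(A)+F)$ together with Weil differentials to identify $i(A)=\dim_{\mathbb{F}_q}\Omega_F(A)$, and the multiplication-by-$w_0$ isomorphism $\mathcal{L}(W-A)\cong\Omega_F(A)$. Your acknowledgment that the adele step is the substantive one, and that in practice one simply cites Stichtenoth, matches exactly what the paper does.
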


In what follows, we assume that $P_1 , \ldots , P_n$ are pairwise
distinct places of $F /{\mathbb F}_{q}$ of degree $1$, and $D = P_1
+ \ldots + P_n $ is a divisor. Let $G$ be a divisor of $F /{\mathbb
F}_{q}$ such that $ \operatorname{supp} G \cap \operatorname{supp} D
= \emptyset$. The geometric Goppa code $C_{\mathcal L}(D, G)$
associated with $D$ and $G$ is defined by $C_{\mathcal L}(D, G):= \{
(x(P_1), \ldots , x(P_n))| x \in {\mathcal L}(G)\} \subseteq
{\mathbb F}_{q}^{n}$.

\begin{theorem}(Thm. 2.2.2./Cor.2.2.3 of \cite{stichtenoth:2009})\label{h1} Under the hypotheses above,
$C_{\mathcal L}(D, G)$ is an $[n, k, d]_{q}$ code with $k =
l(G)-l(G-D)$ and $d \geq n - \deg G$. In addition, if $2g - 2 < \deg
G < n$, then one has $k = \deg G + 1 - g$.
\end{theorem}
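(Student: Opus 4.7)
The plan is to analyze the evaluation map underlying $C_{\mathcal L}(D,G)$ and then use Riemann--Roch to extract dimension and distance bounds. Define the linear evaluation map
\[
\operatorname{ev}_D : {\mathcal L}(G) \longrightarrow {\mathbb F}_q^n, \qquad x \longmapsto (x(P_1), \ldots, x(P_n)).
\]
This is well-defined because the hypothesis $\operatorname{supp} G \cap \operatorname{supp} D = \emptyset$ implies $v_{P_i}(x) \geq -v_{P_i}(G) = 0$ for every $x \in {\mathcal L}(G)$, so $x(P_i)$ makes sense. By construction the image is exactly $C_{\mathcal L}(D,G)$.

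First I would compute the kernel. An element $x \in {\mathcal L}(G)$ lies in $\ker \operatorname{ev}_D$ iff $x(P_i)=0$ for all $i$, i.e.\ $v_{P_i}(x)\geq 1$ for all $i$, which is equivalent to $(x)\geq -G+D$, i.e.\ $x \in {\mathcal L}(G-D)$. Hence $\ker \operatorname{ev}_D = {\mathcal L}(G-D)$, and by rank--nullity the dimension of $C_{\mathcal L}(D,G)$ equals $l(G)-l(G-D)$.

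Next I would derive the distance bound. Let $c=\operatorname{ev}_D(x)$ be a nonzero codeword of weight $w$, and let $P_{i_1},\ldots,P_{i_{n-w}}$ be the $n-w$ places where the coordinates of $c$ vanish. Setting $E := P_{i_1}+\cdots+P_{i_{n-w}}$, the conditions $v_{P_{i_j}}(x)\geq 1$ together with $x \in {\mathcal L}(G)$ yield $x \in {\mathcal L}(G-E)\setminus\{0\}$. Hence $l(G-E)\geq 1$, which forces $\deg(G-E)\geq 0$, giving $n-w=\deg E \leq \deg G$, i.e.\ $w\geq n-\deg G$.

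Finally, for the refined dimension statement under $2g-2<\deg G<n$, I would apply Riemann--Roch twice. Since $\deg G > 2g-2 = \deg W$ for a canonical divisor $W$, one has $\deg(W-G)<0$, so $l(W-G)=0$ and Riemann--Roch gives $l(G)=\deg G + 1 - g$. Since $\deg G < n = \deg D$, we have $\deg(G-D)<0$ and hence $l(G-D)=0$. Combining yields $k = \deg G + 1 - g$. None of the steps is genuinely hard; the only delicate point is to verify carefully that ``$x \in {\mathcal L}(G-E)$'' in the weight argument, which is where the disjointness of $\operatorname{supp}G$ and $\operatorname{supp}D$ is used together with the valuation inequalities.
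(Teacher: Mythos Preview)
Your proof is correct and is exactly the standard argument (evaluation map, kernel identification, Riemann--Roch). The paper itself does not prove this statement at all; it simply quotes it as Theorem~2.2.2/Corollary~2.2.3 of Stichtenoth's book, and your proposal reproduces precisely that textbook proof.
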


Let $D$ and $G$ as above. We define the code $C_{\Omega}(D, G)
\subseteq {\mathbb F}_{q}^{n}$ by $ C_{\Omega}(D, G):= \{
({resp}_{P_1}(w), \ldots , {resp}_{P_n}(w))| w \in {\Omega}_{F}(G -
D)\}$, where ${resp}_{P_i} (w)$ denotes the residue of $w$ at $P_i$.

\begin{theorem}(Thm. 2.2.7. of \cite{stichtenoth:2009})\label{h2}
If $2g - 2 < \deg G < n$, then $C_{\Omega}(D, G)$ is an $[n, k^{*},
d^{*}]_{q}$ code, with $k^{*} = n + g - 1 - \deg G$ and $d^{*} \geq
\deg G -(2g - 2)$.
\end{theorem}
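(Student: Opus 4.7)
The plan is to analyze $C_{\Omega}(D,G)$ as the image of the evaluation-by-residues map $\rho : \Omega_{F}(G-D) \to \mathbb{F}_{q}^{n}$ defined by $\rho(w) = ({\rm res}_{P_1}(w), \ldots , {\rm res}_{P_n}(w))$. Since $\rho$ is $\mathbb{F}_{q}$-linear, we immediately get that $C_{\Omega}(D,G)$ is a linear code of length $n$, so the two remaining tasks are to compute $\dim \operatorname{Im}\rho$ and to bound its minimum distance.

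First, I would identify $\ker \rho$. Because $\operatorname{supp} G \cap \operatorname{supp} D = \emptyset$, any $w \in \Omega_{F}(G-D)$ satisfies $v_{P_i}(w) \geq -1$ at each $P_i$ in $\operatorname{supp} D$ and $v_{P}(w) \geq v_{P}(G)$ elsewhere. Using that a simple pole forces a nonzero residue (and conversely vanishing residue upgrades $v_{P_i}(w) \geq -1$ to $v_{P_i}(w) \geq 0$), I would conclude $\ker \rho = \Omega_{F}(G)$. Hence $\dim C_{\Omega}(D,G) = \dim \Omega_{F}(G-D) - \dim \Omega_{F}(G)$. Translating through the identification $\dim \Omega_{F}(A) = l(W - A)$ (for $W$ canonical) and the Riemann-Roch formula (Theorem~1.1), I obtain
\begin{equation*}
\dim \Omega_{F}(A) = l(A) - \deg A - 1 + g.
\end{equation*}
The hypothesis $\deg G < n$ gives $\deg(G-D) = \deg G - n < 0$, so $l(G-D) = 0$; and $2g-2 < \deg G$ combined with Theorem~\ref{h1} gives $l(G) = \deg G + 1 - g$. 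Substituting yields $k^{*} = n + g - 1 - \deg G$, as claimed.

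For the minimum distance, I would pick any $w \in \Omega_{F}(G-D)$ whose image $\rho(w)$ has Hamming weight $m > 0$. Let $S \subseteq \{1,\ldots,n\}$ be the set of indices where the residue vanishes, so $|S| = n - m$. At each $P_i$ with $i \in S$, vanishing of the residue forces $v_{P_i}(w) \geq 0$, and therefore $w$ lies in $\Omega_{F}\bigl(G - D + \sum_{i \in S} P_i\bigr)$. Since $w \neq 0$, this space has positive dimension, which in turn forces $\deg\bigl(G - D + \sum_{i \in S} P_i\bigr) \leq 2g - 2$ (any nonzero differential $w$ satisfies $(w) = W$ canonical with $\deg W = 2g - 2$). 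Rearranging $\deg G - n + (n - m) \leq 2g - 2$ produces $m \geq \deg G - (2g-2)$, giving the desired bound $d^{*} \geq \deg G - (2g - 2)$.

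The main obstacle I anticipate is the bookkeeping around the kernel computation: one must use the disjointness $\operatorname{supp} G \cap \operatorname{supp} D = \emptyset$ in precisely the right way so that vanishing of residues at the $P_i$ upgrades the pole condition to a regularity condition, matching $\Omega_{F}(G)$ exactly. The dimension count is then routine via Riemann-Roch, and the distance bound is a clean application of the fact that $\Omega_{F}(A) \neq 0$ implies $\deg A \leq 2g - 2$.
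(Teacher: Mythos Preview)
Your proof is correct and follows the standard argument found in Stichtenoth's textbook. Note, however, that the paper does not actually prove this statement: Theorem~\ref{h2} is quoted verbatim as a preliminary result from \cite{stichtenoth:2009} (Thm.~2.2.7) and is used without proof. So there is no ``paper's own proof'' to compare against; your write-up essentially reconstructs the textbook proof that the paper is citing.
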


\begin{theorem}(Thm. 2.2.8. of \cite{stichtenoth:2009})\label{h3}
The codes $C_{\mathcal L}(D, G)$ and $C_{\Omega}(D, G)$ are
(Euclidean) dual of each other, i.e., $C_{\Omega}(D, G)= C_{\mathcal
L}(D, G)^{\perp}$.
\end{theorem}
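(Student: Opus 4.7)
The plan is to establish the duality in two steps: first prove the inclusion $C_\Omega(D,G)\subseteq C_\mathcal{L}(D,G)^\perp$ via the residue theorem, and then show the dimensions add up to $n$ so equality must hold.

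For the inclusion, I would take an arbitrary $x\in\mathcal{L}(G)$ and an arbitrary Weil differential $w\in\Omega_F(G-D)$, and examine the product $xw$. Since $(x)\geq -G$ and $(w)\geq D-G$, adding these gives $(xw)\geq D-G+(-G)$... more carefully, one checks locally that $xw\in\Omega_F(-D)$, so $xw$ has no poles outside $\{P_1,\ldots,P_n\}$ and its residues at every other place vanish. The residue theorem then yields $\sum_{i=1}^{n}\operatorname{res}_{P_i}(xw)=0$. Because each $P_i$ is a degree-one place lying outside $\operatorname{supp} G$, $x$ is regular at $P_i$ and the local residue computation gives $\operatorname{res}_{P_i}(xw)=x(P_i)\operatorname{res}_{P_i}(w)$. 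Hence the Euclidean inner product of $(x(P_1),\ldots,x(P_n))\in C_\mathcal{L}(D,G)$ with $(\operatorname{res}_{P_1}(w),\ldots,\operatorname{res}_{P_n}(w))\in C_\Omega(D,G)$ is zero, establishing $C_\Omega(D,G)\subseteq C_\mathcal{L}(D,G)^\perp$.

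For the dimensions, I would identify the kernel of the evaluation map $\mathcal{L}(G)\to\mathbb{F}_q^n$, $x\mapsto(x(P_1),\ldots,x(P_n))$, as $\mathcal{L}(G-D)$, giving $\dim C_\mathcal{L}(D,G)=l(G)-l(G-D)$ as already recorded in Theorem~\ref{h1}. Similarly, the kernel of the residue map $\Omega_F(G-D)\to\mathbb{F}_q^n$ is $\Omega_F(G)$, since a differential in $\Omega_F(G-D)$ has vanishing residues at all $P_i$ exactly when it lies in $\Omega_F(G)$. Therefore $\dim C_\Omega(D,G)=i(G-D)-i(G)$, where $i(A)=\dim\Omega_F(A)=l(W-A)$ for a canonical divisor $W$. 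Substituting and applying the Riemann-Roch theorem to both $l(W-G+D)$ and $l(W-G)$, after the $g$ terms and $\deg G$ terms cancel I expect to land on $\dim C_\Omega(D,G)=n-l(G)+l(G-D)=n-\dim C_\mathcal{L}(D,G)$.

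Combining the two parts gives $C_\Omega(D,G)\subseteq C_\mathcal{L}(D,G)^\perp$ together with $\dim C_\Omega(D,G)=n-\dim C_\mathcal{L}(D,G)=\dim C_\mathcal{L}(D,G)^\perp$, forcing equality. The main obstacle I anticipate is the local residue analysis: one needs the identity $\operatorname{res}_{P_i}(xw)=x(P_i)\operatorname{res}_{P_i}(w)$ when $x$ is regular at $P_i$, and the fact that $xw\in\Omega_F(-D)$, both of which rely on standard but nontrivial properties of Weil differentials; the dimension count is then a mechanical but careful application of Riemann-Roch to $G-D$.
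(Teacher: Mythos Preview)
Your argument is correct and is precisely the standard proof: orthogonality via the residue theorem applied to $xw\in\Omega_F(-D)$, together with a Riemann--Roch dimension count showing $\dim C_\Omega(D,G)+\dim C_{\mathcal L}(D,G)=n$. Note, however, that the paper itself does not prove this statement at all --- it is quoted verbatim as Theorem~2.2.8 of Stichtenoth's book in the preliminaries section --- so there is no ``paper's own proof'' to compare against; what you have written is essentially the proof given in that reference.
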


\subsection{Quantum codes}\label{sec2.2}

In this subsection, we recall some necessary concepts concerning
quantum codes and also the CSS code construction. For more details
on quantum coding theory, we refer the reader to
\cite{nielsen:2000,calderbank:1998} in the case of the binary or
quaternary alphabet, and the paper \cite{ketkar:2006} in the general
case of nonbinary alphabets.

Recall that a $q$-ary quantum code ${\mathbb Q}$ of length $n$ is a
$K$-dimensional subspace of the $q^{n}$-dimensional Hilbert space
${({\mathbb C}^{q})}^{\otimes n}$, where $\otimes n$ denotes the
tensor product of vector spaces. If $K=q^{k}$ we write $[[n, k,
d]]_{q}$ to denote a $q$-ary quantum code of length $n$ and minimum
distance $d$. Let $[[n, k, d]]_{q}$ be a quantum code. The quantum
Singleton bound (QSB) asserts that $k + 2d \leq n + 2$. If the
equality holds then the code is called a \emph{maximum distance
separable} (MDS) code.

\begin{lemma}\cite{nielsen:2000,calderbank:1998,ketkar:2006}(CSS
construction) Let $C_1$ and $C_2$ denote two classical linear codes
with parameters ${[n, k_1, d_1]}_{q}$ and ${[n, k_2, d_2]}_{q}$,
respectively, such that $C_1\subset C_2$. Then there exists an
${[[n, k = k_2- k_1, d]]}_{q}$ quantum code, where $d = \min \{wt(c)
| c \in (C_2 \backslash C_1) \cup (\displaystyle C_{1}^{\perp}
\backslash \displaystyle C_{2}^{\perp}) \}$.
\end{lemma}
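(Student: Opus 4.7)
The plan is to prove the lemma by explicitly constructing the quantum code and then verifying both its dimension and its minimum distance. I would begin by fixing a system of coset representatives $\{x_1, \ldots, x_M\}$ for $C_1$ in $C_2$, with $M = q^{k_2 - k_1}$, and defining, for each representative, the normalized state
\[
|\psi_{x_j}\rangle \;:=\; \frac{1}{\sqrt{|C_1|}} \sum_{c \in C_1} |\, x_j + c\, \rangle \;\in\; (\mathbb{C}^q)^{\otimes n}.
\]
The code is then $\mathbb{Q} := \operatorname{span}\{|\psi_{x_1}\rangle, \ldots, |\psi_{x_M}\rangle\}$. Distinct cosets are disjoint, so $\langle \psi_{x_i} | \psi_{x_j}\rangle = \delta_{ij}$ and the dimension is $q^{k = k_2 - k_1}$, as required.

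Next I would establish the distance statement using the Knill--Laflamme error-detection conditions, via the nonbinary Pauli (error) basis on $(\mathbb{C}^q)^{\otimes n}$. Every error operator can be written as $E = X(a) Z(b)$ with $a, b \in \mathbb{F}_q^n$, whose quantum weight is the number of coordinates $i$ with $(a_i, b_i) \neq (0,0)$. The core computation is
\[
\langle \psi_{x_i} | X(a) Z(b) | \psi_{x_j}\rangle \;=\; \frac{1}{|C_1|} \sum_{c \in C_1} \chi\bigl(b \cdot (x_j + c)\bigr)\, \langle x_i + c' \mid x_j + c + a\rangle,
\]
where $\chi$ is a fixed nontrivial additive character of $\mathbb{F}_q$ and $c'$ ranges over $C_1$. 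Orthogonality of the standard basis forces $a + x_j - x_i \in C_1$, i.e.\ $a$ must lie in the coset $x_i - x_j + C_1 \subseteq C_2$ for a nonzero matrix element, while the character sum over $C_1$ then equals $|C_1|$ if $b \in C_1^\perp$ and vanishes otherwise.

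A straightforward case analysis of this matrix element establishes the Knill--Laflamme conditions for any error with $a \notin C_2$ or $b \notin C_1^\perp$, and additionally shows that when $a \in C_1$ and $b \in C_2^\perp$ (using the inclusion $C_2^\perp \subseteq C_1^\perp$ that follows from $C_1 \subset C_2$) the error acts as a scalar on the code, hence is harmless. The remaining errors --- those which fail detection --- are precisely the ones with $a \in C_2 \setminus C_1$ or $b \in C_1^\perp \setminus C_2^\perp$. Taking the minimum weight over this set gives the stated $d$; since all errors of weight strictly less than $d$ satisfy the Knill--Laflamme conditions, the minimum distance of $\mathbb{Q}$ is at least $d$. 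This yields an $[[n, k_2 - k_1, d]]_q$ code.

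The main obstacle is the careful bookkeeping in the character-sum computation above, especially the step that identifies \emph{degenerate} errors (those in $C_1$ on the $X$-side or in $C_2^\perp$ on the $Z$-side) as harmless; this is exactly what forces the distance to be computed on the set differences rather than on $C_2 \cup C_1^\perp$ itself. Once this identification is made, the verification of the Knill--Laflamme conditions reduces to standard orthogonality of additive characters on $\mathbb{F}_q$ and on the subgroup $C_1 \le \mathbb{F}_q^n$.
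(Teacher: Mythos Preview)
The paper does not supply a proof of this lemma; it is quoted as a standard result with citations to \cite{nielsen:2000,calderbank:1998,ketkar:2006}. Your sketch is the canonical coset-state construction together with a Knill--Laflamme verification, and it is essentially correct.

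Two small points are worth tightening. First, your description of the ``remaining'' undetectable errors as ``those with $a \in C_2 \setminus C_1$ or $b \in C_1^\perp \setminus C_2^\perp$'' is slightly loose: the precise set is $\{X(a)Z(b) : a \in C_2,\ b \in C_1^\perp\} \setminus \{X(a)Z(b) : a \in C_1,\ b \in C_2^\perp\}$, since an error with $a \in C_2 \setminus C_1$ but $b \notin C_1^\perp$ still has vanishing matrix elements. This does not affect the minimum-weight calculation, because for any such undetectable $X(a)Z(b)$ one has either $a \in C_2 \setminus C_1$ or $b \in C_1^\perp \setminus C_2^\perp$, and the quantum weight is at least $\max(\mathrm{wt}(a),\mathrm{wt}(b))$. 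Second, you conclude only that the minimum distance is \emph{at least} $d$; to obtain equality (as stated in the lemma), observe that for any $c \in C_2 \setminus C_1$ the pure $X$-error $X(c)$ (with $b=0 \in C_2^\perp$) is undetectable of weight exactly $\mathrm{wt}(c)$, and symmetrically for $Z(c)$ with $c \in C_1^\perp \setminus C_2^\perp$. With these two refinements the argument is complete.
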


\section{The New Codes}\label{sec3}

This section is divided into three parts. The first subsection deals
with constructions of quantum $t$-point algebraic geometry codes. In
the second one, we construct AG codes where the divisor $G$ is a sum
of non-rational places and, in the third subsection, we construct
sequences of asymptotically good quantum AG codes.

\subsection{Construction I}\label{sec3.1}

In this section we present the contributions of this work. The first
result utilizes two $t$-point AG codes, where $t \geq 1$, in order
to derive quantum codes with good parameters.

\begin{theorem}(General $t$-point construction, $t \geq 1$)\label{main1}
Let $q$ be a prime power and $F /{\mathbb F}_{q}$ be an algebraic
function field of genus $g$, with $n+t$ pairwise distinct rational
places. Assume that $a_i, b_i$, $i= 1, \ldots , t$, are positive
integers such that $a_i\leq b_i$ for all $i$, and $2g - 2 <
\displaystyle\sum_{i=1}^{t}a_{i} < \displaystyle\sum_{i=1}^{t}b_{i}
< n$. Then there exists a quantum code with parameters ${[[n, k,
d]]}_{q}$, with $k =
\displaystyle\sum_{i=1}^{t}b_{i}-\displaystyle\sum_{i=1}^{t}a_{i}$
and $d \geq \min \left\{ n - \displaystyle\sum_{i=1}^{t}b_{i},
\displaystyle\sum_{i=1}^{t}a_{i}-(2g - 2) \right\}$.
\end{theorem}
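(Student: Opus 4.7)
The plan is to build two nested $t$-point geometric Goppa codes and feed them into the CSS lemma. Among the $n+t$ given rational places, I would fix any $t$ of them, say $Q_1,\ldots,Q_t$, to serve as the support of the evaluation divisors, and take the remaining $n$ rational places $P_1,\ldots,P_n$ to form $D := P_1+\cdots+P_n$. Setting $G_1 := \sum_{i=1}^{t} a_i Q_i$ and $G_2 := \sum_{i=1}^{t} b_i Q_i$, the condition $a_i \leq b_i$ forces $G_1 \leq G_2$, and $\operatorname{supp} G_j \cap \operatorname{supp} D = \emptyset$ holds by construction for $j = 1, 2$.

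Next I would define $C_1 := C_{\mathcal{L}}(D,G_1)$ and $C_2 := C_{\mathcal{L}}(D,G_2)$. Since $G_1 \leq G_2$, one has $\mathcal{L}(G_1) \subseteq \mathcal{L}(G_2)$, and hence $C_1 \subseteq C_2$. The hypothesis $2g - 2 < \sum a_i < \sum b_i < n$ places both $\deg G_1$ and $\deg G_2$ in the range $(2g-2, n)$, so Theorem~\ref{h1} yields $\dim C_j = \deg G_j + 1 - g$ and $d(C_j) \geq n - \deg G_j$ for $j = 1, 2$. In particular, $\dim C_2 - \dim C_1 = \sum b_i - \sum a_i > 0$, so the inclusion is proper, i.e.\ $C_1 \subsetneq C_2$.

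By Theorem~\ref{h3}, the Euclidean dual of $C_j$ is $C_{\Omega}(D,G_j)$, and Theorem~\ref{h2} then gives $d(C_1^{\perp}) \geq \sum a_i - (2g-2)$ and $d(C_2^{\perp}) \geq \sum b_i - (2g-2)$. Applying the CSS lemma to the pair $C_1 \subsetneq C_2$ produces a quantum code of length $n$, dimension $k = \dim C_2 - \dim C_1 = \sum b_i - \sum a_i$, and minimum distance at least $\min\{d(C_2),\, d(C_1^{\perp})\}$, because $C_2 \setminus C_1 \subseteq C_2$ and $C_1^{\perp} \setminus C_2^{\perp} \subseteq C_1^{\perp}$. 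This minimum is bounded below by $\min\{\, n - \sum b_i,\; \sum a_i - (2g-2) \,\}$, which is the desired bound.

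The argument is essentially a bookkeeping exercise assembling Theorems~\ref{h1}--\ref{h3} with the CSS lemma; I do not anticipate a serious obstacle. The only mild subtlety is arranging $\operatorname{supp} G_j \cap \operatorname{supp} D = \emptyset$, which is why the hypothesis demands $n + t$ rational places: $t$ of them are dedicated to the support of $G_1, G_2$ and the remaining $n$ to $D$.
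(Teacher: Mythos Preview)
Your proposal is correct and follows essentially the same approach as the paper: you choose $t$ rational places for the support of $G_1,G_2$, use the remaining $n$ for $D$, invoke Theorems~\ref{h1}--\ref{h3} to compute the parameters of $C_1\subset C_2$ and their duals, and apply the CSS lemma. The paper's proof is identical in structure, differing only in notation (it labels the $t$ special places $P_{n+1},\ldots,P_{n+t}$ rather than $Q_1,\ldots,Q_t$).
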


\begin{proof}
Let $\{P_1 , P_2, \ldots ,P_n , P_{n+1}, \ldots , P_{n+t}\}$ be the
set of places of $F /{\mathbb F}_{q}$ of degree one. Let $D= P_1 +
\ldots + P_n$ be a divisor of $F /{\mathbb F}_{q}$. Assume also that
$G_1$ and $G_2$ are two divisors of $F /{\mathbb F}_{q}$ given,
respectively, by $G_1 = a_{1}P_{n+1} + \ldots + a_{t}P_{n+t}$ and
$G_2 = b_{1}P_{n+1} + \ldots + b_{t}P_{n+t}$, where $a_i \leq b_i $
for all $i = 1, \ldots, t$ and $2g - 2 <
\displaystyle\sum_{i=1}^{t}a_{i} < \displaystyle\sum_{i=1}^{t}b_{i}
< n$. From construction, $ \operatorname{supp} G_1 \cap
\operatorname{supp} D = \emptyset$ and $ \operatorname{supp} G_2
\cap \operatorname{supp} D = \emptyset$. Since $G_1 \leq G_2$, one
has ${\mathcal L}(G_1) \subset {\mathcal L}(G_2)$; so $C_{\mathcal
L}(D, G_1) \subset C_{\mathcal L}(D, G_2 )$. From Theorem~\ref{h1},
the code $C_1 := C_{\mathcal L}(D, G_1)$ has parameters $[n, k_1 ,
d_1]_{q}$, where $d_1 \geq n - \displaystyle\sum_{i=1}^{t}a_{i}$ and
$k_1 = \displaystyle\sum_{i=1}^{t}a_{i} - g +1$; the code $C_2:=
C_{\mathcal L}(D, G_2)$ has parameters $[n, k_2 , d_2]_{q}$, where
$d_2 \geq n - \displaystyle\sum_{i=1}^{t}b_{i}$ and $k_2 =
\displaystyle\sum_{i=1}^{t}b_{i} - g +1$. On the other hand, from
Theorems~\ref{h2} and \ref{h3}, the dual code $C_{1}^{\perp}=
C_{\Omega}(D, G_1)$ of $C_1$ has parameters $[n, k_{1}^{\perp},
d_{1}^{\perp}]_{q}$, where $d_{1}^{\perp} \geq
\displaystyle\sum_{i=1}^{t}a_{i} -(2g - 2)$ and $k_{1}^{\perp}= n +
g - 1 - \displaystyle\sum_{i=1}^{t}a_{i}$; the dual code
$C_{2}^{\perp}= C_{\Omega}(D, G_2)$ of $C_2$ has parameters $[n,
k_{2}^{\perp}, d_{2}^{\perp}]_{q}$, with $d_{2}^{\perp} \geq
\displaystyle\sum_{i=1}^{t}b_{i} -(2g - 2)$ and $k_{2}^{\perp}= n +
g - 1 - \displaystyle\sum_{i=1}^{t}b_{i}$.

Applying the CSS construction to the codes $C_1$ and $C_2$, we
obtain a quantum code with parameters ${[[n, k, d]]}_{q}$, with
$k=k_2 - k_1 = (\displaystyle\sum_{i=1}^{t}b_{i} - g +1)-
(\displaystyle\sum_{i=1}^{t}a_{i} - g +1)=
\displaystyle\sum_{i=1}^{t}b_{i} - \displaystyle\sum_{i=1}^{t}a_{i}$
and $d \geq \min \{d_2 , d_{1}^{\perp}\}$, where $d_2 \geq n -
\displaystyle\sum_{i=1}^{t}b_{i}$ and $d_{1}^{\perp} \geq
\displaystyle\sum_{i=1}^{t}a_{i} -(2g - 2)$. The proof is complete.
\end{proof}

\begin{remark}
In \cite{ling:2001,walker:2008}, the authors utilized one-point AG
codes to construct good/(asymptotically good) quantum codes. In
\cite{chen:2001}, the author applied two-point AG codes to derive
good/(asymptotically good) quantum codes. Note that, in this
context, Theorem~\ref{main1} is a natural generalization of the
one-point as well as two-point AG code construction to the $t$-point
($t \geq 1$) AG code construction.
\end{remark}

\begin{corollary}\label{main2}(One-Point codes)
There exists a quantum code with parameters $[[q(1 + (q-1)m),$ $b-a,
d]{]}_{q^{2}}$, where $(q-1)(m-1)-2 < a < b < q(1 + (q-1)m)$,
$m|(q+1)$ and $d \geq \min \{q(1 + (q-1)m) - b, a -(q-1)(m-1) +2\}$.
\end{corollary}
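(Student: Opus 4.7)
The plan is to realize this as the one-point ($t=1$) case of Theorem~\ref{main1}, so the task reduces to exhibiting an algebraic function field over ${\mathbb F}_{q^2}$ whose genus $g$ and number of rational places $N$ produce, respectively, the value $2g-2=(q-1)(m-1)-2$ appearing in the lower bound on $d$ and $N\geq q(1+(q-1)m)+1$, so that $n=q(1+(q-1)m)$ affine rational places plus one extra rational place (to serve as support of the divisor $G$) are available.

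The natural candidate, and the one I would use, is the Kummer-type subcover of the Hermitian curve
\[
y^{m}=x^{q}+x,\qquad m\mid (q+1),
\]
viewed as a function field $F/{\mathbb F}_{q^{2}}$. Using the standard Riemann--Hurwitz / Kummer-extension calculation applied to the extension $F/{\mathbb F}_{q^{2}}(x)$ of degree $m$, whose only ramified places are those lying over the $q$ roots of $x^{q}+x$ in ${\mathbb F}_{q^{2}}$ together with the place at infinity (all totally ramified), one obtains
\[
g=\tfrac{(q-1)(m-1)}{2}.
\]
For the point count: the $q$ elements $\alpha\in{\mathbb F}_{q^{2}}$ with $\alpha^{q}+\alpha=0$ contribute $q$ rational places (one $y$-value each); for the remaining $q^{2}-q$ values of $x$, the element $x^{q}+x$ lies in ${\mathbb F}_{q}^{*}$, and since $m\mid q+1$ implies $m\mid q^{2}-1$ and every element of ${\mathbb F}_{q}^{*}$ is an $m$-th power in ${\mathbb F}_{q^{2}}^{*}$ (because $(q^{2}-1)/m$ is a multiple of $q-1$), each such $x$ contributes exactly $m$ rational places; adding the unique place $P_{\infty}$ at infinity gives
\[
N=q+m(q^{2}-q)+1=q\bigl(1+(q-1)m\bigr)+1.
\]

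With this function field in hand, label the rational places so that $P_{1},\ldots,P_{n}$ (with $n=q(1+(q-1)m)$) are the affine ones, set $D=P_{1}+\cdots+P_{n}$, and take the single extra place $P_{n+1}:=P_{\infty}$. Apply Theorem~\ref{main1} with $t=1$, $a_{1}=a$, $b_{1}=b$; the hypothesis $2g-2<a<b<n$ translates directly into $(q-1)(m-1)-2<a<b<q(1+(q-1)m)$, which is exactly the range stated in the corollary. The conclusion yields an $[[n,b-a,d]]_{q^{2}}$ quantum code with
\[
d\geq\min\bigl\{n-b,\;a-(2g-2)\bigr\}=\min\bigl\{q(1+(q-1)m)-b,\;a-(q-1)(m-1)+2\bigr\},
\]
as required.

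The only non-mechanical step is the verification of the two invariants $g$ and $N$ for the chosen function field; I expect the main obstacle to be a clean justification that every nonzero value of $x^{q}+x$ admits exactly $m$ distinct $m$-th roots in ${\mathbb F}_{q^{2}}$ (which is where the divisibility hypothesis $m\mid q+1$ is essential) and that the place at infinity is totally ramified so it contributes a single rational place. Once those two structural facts are secured, the substitution into Theorem~\ref{main1} is immediate and completes the proof.
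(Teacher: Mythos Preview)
Your approach is essentially the same as the paper's: both apply Theorem~\ref{main1} with $t=1$ to the function field $F/{\mathbb F}_{q^{2}}$ defined by $y^{q}+y=x^{m}$ (equivalently, after swapping variables, your $y^{m}=x^{q}+x$), which has genus $(q-1)(m-1)/2$ and exactly $1+q(1+(q-1)m)$ rational places. The only difference is cosmetic: the paper cites Example~6.4.2 of \cite{stichtenoth:2009} for the genus and point count, whereas you compute them directly via the Kummer presentation.
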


\begin{proof}
Let $F = {\mathbb F}_{q^{2}}(x, y)$, where $y^{q}+y = x^{m}$ and
$m|(q+1)$. It is known that the genus of $F$ is equal to $g =
(q-1)(m-1)/2$, and the number of places of degree one is $N=1+ q(1 +
(q-1)m)$ (see Example 6.4.2. of \cite{stichtenoth:2009}). Let $\{P_1
, P_2, \ldots ,P_n , P_{n+1}, \ldots , P_{N}\}$ be these pairwise
distinct places. Without loss of generality, choose the ${\mathbb
F}_{q^{2}}$-rational point $P_{N}$. Let $D= P_1 + \ldots + P_{N-1}$
be a divisor and let $G_1= a P_N$ and $G_2 = b P_N$ other two
divisors such that $ \operatorname{supp} G_1 \cap
\operatorname{supp} D = \emptyset$ and $ \operatorname{supp} G_2
\cap \operatorname{supp} D = \emptyset$, where $(q-1)(m-1)-2 < a < b
< q(1 + (q-1)m)$. From Theorem~\ref{main1}, there exists a quantum
code with parameters ${[[q(1 + (q-1)m), b-a, d]]}_{q^{2}}$, where $d
\geq \min \{q(1 + (q-1)m) - b, a -(q-1)(m-1) + 2\}$. The proof is
complete.
\end{proof}

\begin{remark}
Note that the Hermitian curve defined as $y^{q}+y= x^{q+1}$, over
${\mathbb F}_{q^{2}}$, is a particular case of the curve $y^{q}+y =
x^{m}$, considered in the proof of Corollary~\ref{main2}.
\end{remark}

\begin{corollary}\label{main3}(Two-Point codes)
There exists a quantum code with parameters $[[q(1 + (q-1)m)-1, b_1
+ b_2- a_1 - a_2, d]{]}_{q^{2}}$, where $a_i \leq b_i$ for $i=1, 2$,
$(q-1)(m-1)-2 < a_1 + a_2 < b_1 + b_2 < q[1 + (q-1)m]-1$, $m|(q+1)$
and $d \geq \min \{q[1 + (q-1)m] - b_1 - b_2 -1, a_1 + a_2
-(q-1)(m-1) +2\}$.
\end{corollary}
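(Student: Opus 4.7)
The plan is to instantiate Theorem~\ref{main1} in the two-point case ($t=2$) using the same function field $F = \mathbb{F}_{q^{2}}(x,y)$ with defining equation $y^{q}+y = x^{m}$ (where $m\mid q+1$) that appeared in the proof of Corollary~\ref{main2}. By Example 6.4.2 of \cite{stichtenoth:2009}, this function field has genus $g = (q-1)(m-1)/2$ and exactly $N = 1 + q(1+(q-1)m)$ rational places, which is precisely $n+t$ for the target length $n = q(1+(q-1)m)-1$ and $t=2$.

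First, I would label the rational places as $P_{1},\ldots,P_{N}$ and, rather than reserving a single point as in the one-point construction, reserve the last two points $P_{N-1}$ and $P_{N}$ to carry the divisors. I would then set $D = P_{1} + \cdots + P_{N-2}$, so the code length becomes $n = N - 2 = q(1+(q-1)m) - 1$, matching the statement. The nested divisors would be $G_{1} = a_{1}P_{N-1} + a_{2}P_{N}$ and $G_{2} = b_{1}P_{N-1} + b_{2}P_{N}$, and the disjoint-support condition $\operatorname{supp}G_{i}\cap\operatorname{supp}D = \emptyset$ holds automatically from this choice of reserved places.

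Next, I would translate the hypotheses of the corollary into those of Theorem~\ref{main1}. Since $2g-2 = (q-1)(m-1)-2$, the assumption $(q-1)(m-1)-2 < a_{1}+a_{2} < b_{1}+b_{2} < q(1+(q-1)m)-1$ is exactly $2g-2 < \sum_{i=1}^{2}a_{i} < \sum_{i=1}^{2}b_{i} < n$. Applying Theorem~\ref{main1} with $t=2$ then produces a quantum code of length $n = q(1+(q-1)m)-1$, dimension $k = (b_{1}+b_{2}) - (a_{1}+a_{2})$, and minimum distance $d \geq \min\{\,n - (b_{1}+b_{2}),\ (a_{1}+a_{2}) - (2g-2)\,\}$; substituting the explicit values of $n$ and $2g-2$ gives exactly the bound claimed in the corollary.

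I do not foresee a genuine obstacle here: the corollary is the $t=2$ specialization of Theorem~\ref{main1} for the curve $y^{q}+y = x^{m}$. The only point requiring care is the bookkeeping of the length when two (rather than one) rational places are removed from $D$, which explains why the length is $q(1+(q-1)m)-1$ rather than $q(1+(q-1)m)$ as in Corollary~\ref{main2}.
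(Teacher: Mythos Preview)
Your proposal is correct and follows essentially the same approach as the paper: specialize Theorem~\ref{main1} with $t=2$ to the function field $F=\mathbb{F}_{q^{2}}(x,y)$ with $y^{q}+y=x^{m}$, reserve two rational places for the divisors $G_{1},G_{2}$, and read off the parameters. Your choice of $P_{N-1},P_{N}$ as the reserved places is in fact cleaner than the paper's stated $P_{N-2},P_{N-1}$ (which overlaps with $D$ and is evidently a typo), but the argument is identical.
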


\begin{proof}
Let $D= P_1 + \ldots + P_{N-2}$ be a divisor and let $G_1= a_1
P_{N-2} + a_2 P_{N-1}$ and $G_2 = b_1 P_{N-2} + b_2 P_{N-1}$ be
other two divisors with $ \operatorname{supp} G_1 \cap
\operatorname{supp} D = \emptyset$ and $ \operatorname{supp} G_2
\cap \operatorname{supp} D = \emptyset$, where $(q-1)(m-1)-2 < a_1 +
a_2 < b_1 + b_2 < q(1 + (q-1)m)-1$. From Theorem~\ref{main1}, there
exists a quantum code with parameters ${[[q(1 + (q-1)m)-1, b_1 +
b_2-a_1 - a_2, d]]}_{q^{2}}$, where $d \geq \min \{q(1 + (q-1)m)-1 -
b_1 - b_2, a_1 + a_2 -(q-1)(m-1) + 2\}$.
\end{proof}

\begin{corollary}\label{main4}($t$-Point codes, $t\geq 2$)
There exists a quantum code with parameters $[[q(1 + (q-1)m)-t+1,
b_1 + \ldots + b_t - (a_1 + \ldots + a_t), d]{]}_{q^{2}}$, where
$a_i \leq b_i$ for $i=1, \ldots t$, $(q-1)(m-1)-2 < a_1 + \ldots +
a_t < b_1 + \ldots + b_t < q(1 + (q-1)m)-t + 1$, $m|(q+1)$ and $d
\geq \min \{q(1 + (q-1)m) -(b_1 + \ldots + b_t)-t+1, a_1 + \ldots +
a_t -(q-1)(m-1) +2\}$.
\end{corollary}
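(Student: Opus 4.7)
The plan is to obtain Corollary~\ref{main4} as a direct specialization of Theorem~\ref{main1} to the same function field $F = \mathbb{F}_{q^{2}}(x,y)$ with $y^{q} + y = x^{m}$ (and $m \mid (q+1)$) that was used in the proof of Corollary~\ref{main2}. From Example~6.4.2 of \cite{stichtenoth:2009}, this function field has genus $g = (q-1)(m-1)/2$ and exactly $N = 1 + q(1 + (q-1)m)$ rational places; these are the only two structural facts about the curve that I will need.

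Having fixed such a function field, I would enumerate its rational places as $P_{1}, \ldots, P_{N}$ and form the divisor $D = P_{1} + \ldots + P_{N-t}$, so that the code length becomes $n = N - t = q(1 + (q-1)m) - t + 1$, matching the corollary. I would then set
\[ G_{1} = a_{1} P_{N-t+1} + \ldots + a_{t} P_{N}, \qquad G_{2} = b_{1} P_{N-t+1} + \ldots + b_{t} P_{N}, \]
supported on the remaining $t$ rational places. The inequalities $a_{i} \leq b_{i}$ give $G_{1} \leq G_{2}$, and by construction $\operatorname{supp} G_{i} \cap \operatorname{supp} D = \emptyset$ for $i = 1, 2$.

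The remaining step is a routine hypothesis check: the condition $2g - 2 < \sum_{i=1}^{t} a_{i} < \sum_{i=1}^{t} b_{i} < n$ required by Theorem~\ref{main1} reduces, with $2g - 2 = (q-1)(m-1) - 2$ and $n = q(1+(q-1)m) - t + 1$, to exactly the hypothesis $(q-1)(m-1) - 2 < a_{1} + \ldots + a_{t} < b_{1} + \ldots + b_{t} < q(1+(q-1)m) - t + 1$ stated in the corollary. Applying Theorem~\ref{main1} then yields a quantum code of length $n$, dimension $\sum_{i=1}^{t} b_{i} - \sum_{i=1}^{t} a_{i}$, and minimum distance bounded below by $\min\{ n - \sum_{i=1}^{t} b_{i},\, \sum_{i=1}^{t} a_{i} - (2g-2) \}$, which after the same substitutions is precisely the bound stated.

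Since this is a direct specialization, I do not foresee a genuine obstacle. The main point of care is to track the subtraction $N - t$ correctly in the length and to note that the corollary collapses to Corollary~\ref{main3} when $t = 2$ and (up to the choice of which $t$ places form the support of $G_{1}, G_{2}$) to Corollary~\ref{main2} when $t = 1$, so the formulas are internally consistent with the preceding two results.
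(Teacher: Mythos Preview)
Your proposal is correct and follows essentially the same approach as the paper, which simply says the proof is ``Similar to that of Corollary~\ref{main3}'': apply Theorem~\ref{main1} to the function field $F=\mathbb{F}_{q^2}(x,y)$ with $y^q+y=x^m$, reserving $t$ of its $N=1+q(1+(q-1)m)$ rational places for the supports of $G_1,G_2$ and the remaining $N-t$ for $D$. Your explicit indexing $P_{N-t+1},\ldots,P_N$ for the support of $G_1,G_2$ is in fact cleaner than the paper's own proof of Corollary~\ref{main3}, which appears to contain an index typo.
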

\begin{proof}
Similar to that of Corollary~\ref{main3}.
\end{proof}

\subsection{Construction II}\label{sec3.2}

In this section we deal with constructions of quantum codes derived
from AG codes whose divisors are multiples of a non rational divisor
$G$. The first result is given in the following.

\begin{theorem}(General construction)\label{main5}
Let $q$ be a prime power and let $F /{\mathbb F}_{q}$ be an
algebraic function field of genus $g$, with $n$ pairwise distinct
rational places $P_i$, $i=1, \ldots , n$. Assume that there exist
pairwise distinct places $Q_1 , \ldots , Q_t$ of $F /{\mathbb
F}_{q}$, of degree ${\alpha}_{i} \geq 2$, respectively, $i=1, \ldots
, t$, where $t\geq 1$. Let $G_1
=\displaystyle\sum_{i=1}^{t}a_{i}Q_{i}$ and $G_2
=\displaystyle\sum_{i=1}^{t}b_{i}Q_{i}$, where $a_i \leq b_i$, for
all $i= 1, \ldots , t$, and $2g - 2 < a_{1}{\alpha}_{1}+\ldots +
a_{t}{\alpha}_{t} < b_{1}{\alpha}_{1}+\ldots + b_{t}{\alpha}_{t} <
n$. Let $D = P_1 + \ldots + P_n $ be a divisor of $F /{\mathbb
F}_{q}$, and consider that $ \operatorname{supp} G_1 \cap
\operatorname{supp} D = \emptyset$ and  $ \operatorname{supp} G_2
\cap \operatorname{supp} D = \emptyset$. Then there exists a quantum
code with parameters ${[[n, k, d]]}_{q}$, where $k =
(b_{1}-a_{1}){\alpha}_{1}+\ldots + (b_{t}- a_{t}){\alpha}_{t}$ and
$d \geq \min \{n - (b_{1}{\alpha}_{1}+\ldots + b_{t}{\alpha}_{t}),
(a_{1}{\alpha}_{1}+\ldots + a_{t}{\alpha}_{t}) -(2g - 2)\}$.
\end{theorem}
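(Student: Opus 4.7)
The plan is to mirror the proof of Theorem~\ref{main1}, replacing rational places by the higher-degree places $Q_i$ and tracking the fact that $\deg Q_i = \alpha_i \geq 2$. First I would set $C_1 := C_{\mathcal{L}}(D, G_1)$ and $C_2 := C_{\mathcal{L}}(D, G_2)$; the hypotheses $\operatorname{supp} G_1 \cap \operatorname{supp} D = \emptyset$ and $\operatorname{supp} G_2 \cap \operatorname{supp} D = \emptyset$ ensure both geometric Goppa codes are well defined. The inclusion $C_1 \subseteq C_2$ follows from $G_1 \leq G_2$, which yields $\mathcal{L}(G_1) \subseteq \mathcal{L}(G_2)$ and hence the corresponding inclusion of the evaluation codes.

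Next I would extract the classical parameters by invoking Theorem~\ref{h1}. The key observation is that $\deg G_1 = \sum_{i=1}^{t} a_i \alpha_i$ and $\deg G_2 = \sum_{i=1}^{t} b_i \alpha_i$, so the inequality $2g - 2 < \deg G_1 < \deg G_2 < n$ is exactly the assumption of the theorem. Consequently $C_1$ is an $[n, k_1, d_1]_q$ code with $k_1 = \sum a_i \alpha_i - g + 1$ and $d_1 \geq n - \sum a_i \alpha_i$, while $C_2$ is an $[n, k_2, d_2]_q$ code with $k_2 = \sum b_i \alpha_i - g + 1$ and $d_2 \geq n - \sum b_i \alpha_i$.

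Then I would treat the dual using Theorems~\ref{h2} and \ref{h3}. Theorem~\ref{h3} gives $C_1^{\perp} = C_{\Omega}(D, G_1)$, and Theorem~\ref{h2} yields $k_1^{\perp} = n + g - 1 - \sum a_i \alpha_i$ and $d_1^{\perp} \geq \sum a_i \alpha_i - (2g - 2)$; analogous estimates hold for $C_2^{\perp}$. Applying the CSS lemma to the nested pair $C_1 \subset C_2$ then produces a quantum code of length $n$, dimension
\[
k = k_2 - k_1 = \sum_{i=1}^{t}(b_i - a_i)\alpha_i,
\]
and minimum distance bounded below by $\min\{d_2, d_1^{\perp}\}$, which is exactly the bound claimed.

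There is no genuine obstacle here: Riemann--Roch, the residue code construction, and the duality relation $C_{\Omega}(D,G) = C_{\mathcal{L}}(D,G)^{\perp}$ all hold verbatim for divisors supported at non-rational places, so the only care needed is to replace each sum $\sum c_i$ appearing in the one-point/$t$-point rational case by the weighted sum $\sum c_i \alpha_i$ coming from $\deg(\sum c_i Q_i) = \sum c_i \alpha_i$. Thus Theorem~\ref{main5} reduces to a book-keeping generalization of Theorem~\ref{main1}.
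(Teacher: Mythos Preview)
Your proposal is correct and follows exactly the approach the paper intends: the paper's own proof of Theorem~\ref{main5} is simply ``The proof is similar to that of Theorem~\ref{main1},'' and your write-up carries out precisely that similarity, replacing each $\sum c_i$ by the weighted degree $\sum c_i\alpha_i$ and invoking Theorems~\ref{h1}--\ref{h3} together with the CSS lemma. There is nothing to add.
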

\begin{proof}
The proof is similar to that of Theorem~\ref{main1}.
\end{proof}

\begin{corollary}\label{main6}
Let $q$ be a prime power and let $F /{\mathbb F}_{q}$ be a
hyperelliptic function field of genus $g\geq 2$, with $n$ pairwise
distinct rational places. Then there exists a quantum code with
parameters ${[[n, 2(t_2 - t_1), d]]}_{q}$, where $t_1 , t_2$ are
positive integers such that $2g - 2 < t_1 < t_2 < n$ and $d\geq \min
\{ n - 2t_2 , 2t_1 - 2g +2 \}$.
\end{corollary}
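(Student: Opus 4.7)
The plan is to specialize Theorem~\ref{main5} to the case of a single non-rational place of degree $2$, exploiting the index-two rational subfield present in every hyperelliptic function field.

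First, I would produce a place $Q$ of $F/\mathbb{F}_q$ of degree $\alpha_1 = 2$ that is disjoint from $\operatorname{supp} D$. Because $F$ is hyperelliptic, there exists a rational subfield $\mathbb{F}_q(x) \subset F$ with $[F:\mathbb{F}_q(x)] = 2$, and any rational place of $\mathbb{F}_q(x)$ that remains inert in the extension $F/\mathbb{F}_q(x)$ lifts to a place $Q$ of $F$ of degree exactly $2$. If needed, the divisor $D = P_1 + \ldots + P_n$ can be adjusted so that $\operatorname{supp} Q \cap \operatorname{supp} D = \emptyset$.

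Next, I would apply Theorem~\ref{main5} with $t = 1$, $\alpha_1 = 2$, $a_1 = t_1$, $b_1 = t_2$, $G_1 = t_1 Q$, and $G_2 = t_2 Q$. The hypothesis $2g - 2 < t_1 < t_2$ immediately gives $2g - 2 < 2t_1 < 2t_2$ (since $t_1 > 0$), and the condition $2t_2 < n$ (implicit in the statement to make $n - 2t_2 > 0$) supplies the remaining inequality $b_1 \alpha_1 < n$. Theorem~\ref{main5} then yields a quantum code with $k = (b_1 - a_1)\alpha_1 = 2(t_2 - t_1)$ and $d \geq \min\{n - 2t_2,\, 2t_1 - (2g - 2)\} = \min\{n - 2t_2,\, 2t_1 - 2g + 2\}$, which matches the claimed parameters.

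The only nontrivial step is the existence of a degree-$2$ place of $F$ disjoint from the $P_i$, and this is the expected obstacle. It follows from the double cover $F \to \mathbb{F}_q(x)$ admitting at least one inert rational place of $\mathbb{F}_q(x)$, a mild condition that is satisfied whenever the ground field is large enough relative to $g$ so that the $q+1$ rational places of $\mathbb{F}_q(x)$ cannot all ramify or split in $F$; granted this, everything else reduces to a direct substitution into Theorem~\ref{main5}.
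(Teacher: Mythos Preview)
Your approach is essentially the same as the paper's: produce a degree-two place $Q$, set $G_1 = t_1 Q$ and $G_2 = t_2 Q$, and run the CSS construction on the resulting nested AG codes. The paper invokes Lemma~6.2.2(a) of Stichtenoth directly for the existence of such a place (rather than your inert-place argument) and then re-derives the parameter computation by hand instead of citing Theorem~\ref{main5}, but the substance is identical. One simplification you can make: since $Q$ has degree $2$ while every $P_i$ in $D$ has degree $1$, the condition $\operatorname{supp}(Q) \cap \operatorname{supp}(D) = \emptyset$ is automatic, so no adjustment of $D$ is ever needed.
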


\begin{proof}
Since $F$ is a hyperelliptic function field then there exists a
place $G$ of degree two (see Lemma 6.2.2.(a) of
\cite{stichtenoth:2009}). Let $D= P_1 + \ldots + P_{n}$ be a
divisor, where $P_i$ are all rational points of $F$. Let $G_2 = t_2
G$ and $G_1 = t_1 G$, where $2g-2 < 2t_1 < 2t_2 < n$. We know that $
\operatorname{supp} G_1 \cap \operatorname{supp} D = \emptyset$, $
\operatorname{supp} G_2 \cap \operatorname{supp} D = \emptyset$ and
$C_{\mathcal L}(D, G_1) \subset C_{\mathcal L}(D, G_2 )$. From
Theorem~\ref{h1}, the code $C_1 := C_{\mathcal L}(D, G_1)$ has
parameters $[n, k_1 , d_1]_{q}$, where $d_1 \geq n - 2t_1$ and $k_1
= 2t_1 - g +1$. The code $C_2:= C_{\mathcal L}(D, G_2)$ has
parameters $[n, k_2 , d_2]_{q}$, where $d_2 \geq n - 2t_2$ and $k_2
= 2t_2 - g +1$.

From Theorems~\ref{h2} and \ref{h3}, the dual code $C_{1}^{\perp}=
C_{\Omega}(D, G_1)$ of $C_1$ has parameters $[n, k_{1}^{\perp},
d_{1}^{\perp}]_{q}$, where $d_{1}^{\perp} \geq 2t_1 -(2g - 2)$ and
$k_{1}^{\perp}= n + g - 1 - 2t_1$. Similarly, the dual code
$C_{2}^{\perp}= C_{\Omega}(D, G_2)$ of $C_2$ has parameters $[n,
k_{2}^{\perp}, d_{2}^{\perp}]_{q}$, where $d_{2}^{\perp} \geq 2t_2
-(2g - 2)$ and $k_{2}^{\perp}= n + g - 1 - 2t_2$.

Applying the CSS construction to the codes $C_1$ and $C_2$, we
obtain an ${[[n, k, d]]}_{q}$ quantum code, with $k=k_2 - k_1 =
(2t_2 - g +1)- (2t_1 - g +1) = 2(t_2 - t_1)$ and $d \geq \min \{d_2
, d_{1}^{\perp}\}$, where $d_2 \geq n - 2t_2$ and $d_{1}^{\perp}
\geq 2t_1 -(2g - 2)$. Thus, the result follows.
\end{proof}

\begin{corollary}\label{main7}
There exists a quantum code with parameters ${[[46, 2(t_2 - t_1),
d]]}_{25}$, where $t_1 , t_2$ are positive integers such that $1
<t_1 < t_2 < 23$ and $d\geq \min \{ 46 - 2t_2 , 2t_1 - 2 \}$.
\end{corollary}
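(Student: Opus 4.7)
The plan is to recognize Corollary~\ref{main7} as the specialization of Corollary~\ref{main6} with $q=25$, $n=46$, and $g=2$. These values are forced by matching data: the target bound $d\geq\min\{46-2t_2,\,2t_1-2\}$ coincides with the generic bound $\min\{n-2t_2,\,2t_1-2g+2\}$ exactly when $g=2$, and the admissible range $1<t_1<t_2<23$ is precisely $2g-2<2t_1<2t_2<n$ under $g=2$, $n=46$. Consequently the entire task reduces to exhibiting a hyperelliptic function field $F/\mathbb{F}_{25}$ of genus $2$ carrying $46$ distinct rational places.

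By the Hasse--Weil bound, a genus-$2$ function field over $\mathbb{F}_{25}$ admits at most $25+1+2\cdot 2\cdot\sqrt{25}=46$ rational places, so the field must be $\mathbb{F}_{25}$-maximal. I would either invoke the well-known existence of a genus-$2$ maximal curve over $\mathbb{F}_{25}$ (such examples appear in the classical tables of optimal curves) or display an explicit hyperelliptic model $y^2=f(x)$ with $\deg f\in\{5,6\}$ whose Jacobian is $\mathbb{F}_{25}$-isogenous to the square of a supersingular elliptic curve over $\mathbb{F}_{5}$ of Frobenius trace $-2\sqrt{5}$, which forces maximality.

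Once $F$ is fixed, I enumerate its $46$ rational places as $P_1,\ldots,P_{46}$ and form $D:=P_1+\cdots+P_{46}$. Invoking Lemma~6.2.2(a) of \cite{stichtenoth:2009}, the hyperelliptic structure furnishes a place $Q$ of degree $2$; since $\deg Q=2$, the support of $Q$ is automatically disjoint from the rational support of $D$. Setting $G_1:=t_1 Q$ and $G_2:=t_2 Q$ for any $1<t_1<t_2<23$, the inequalities $2g-2=2<2t_1<2t_2<46=n$ hold, so Corollary~\ref{main6} applies and yields the required $[[46,\,2(t_2-t_1),\,d]]_{25}$ quantum code with $d\geq\min\{46-2t_2,\,2t_1-2\}$.

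The only non-routine step is securing the existence of the genus-$2$ maximal curve over $\mathbb{F}_{25}$; once this is settled by citation or by an explicit equation, the rest of the proof is a mechanical substitution into the general hyperelliptic construction of Corollary~\ref{main6}.
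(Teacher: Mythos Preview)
Your reduction to Corollary~\ref{main6} with $q=25$, $n=46$, $g=2$ is exactly the paper's approach, and your parameter matching is correct. The only difference lies in how the required function field is exhibited. The paper simply takes the curve $y^{q}+y=x^{m}$ over $\mathbb{F}_{q^{2}}$ with $q=5$ and $m=2$ (the same family already used in Corollaries~\ref{main2}--\ref{main4}, cf.\ Example~6.4.2 of \cite{stichtenoth:2009}); this has genus $g=(q-1)(m-1)/2=2$ and $N=1+q(1+(q-1)m)=46$ rational places, and since any genus-$2$ function field is hyperelliptic (Lemma~6.2.2(b) of \cite{stichtenoth:2009}), Corollary~\ref{main6} applies immediately. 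Your route---appealing to tables of maximal curves or building a model $y^{2}=f(x)$ with a supersingular Jacobian---works in principle but is considerably less direct and leaves the existence step as an external citation, whereas the paper's choice recycles a curve already on hand and makes the proof entirely self-contained in two lines.
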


\begin{proof}
Let us consider the function field $F = {\mathbb F}_{q^{2}}(x, y)$
with $y^{q}+y = x^{m}$ and $m|(q+1)$; take $m=2$ and $q=5$. As the
genus of $F$ is $g=2$, then $F$ is a hyperelliptic function field
(see Lemma 6.2.2.(b) of \cite{stichtenoth:2009}), and the result
follows from Corollary~\ref{main6}.
\end{proof}

\subsection{Construction III}\label{sec3.3}

In this subsection, we propose constructions of sequences of
asymptotically good quantum codes derived from AG codes.

Recall that a tower of function fields (see Def. 1.3 of
\cite{henning:1996}) over ${\mathbb F}_{q}$ is a sequence ${\mathcal
T} = (F_1 , F_2, \ldots )$ of function fields $F_{i}/{\mathbb
F}_{q}$ with the following properties:
\begin{itemize}
\item [ (1)] $F_1\subseteq F_{2} \subseteq F_3 \cdots$.

\item [ (2)] For each $n\geq 1$, the extension $F_{n+1}/F_{n}$ is separable of degree
$[F_{n+1} : F_{n}] > 1$.

\item [ (3)] $g(F_{j}) > 1$, for some $j > 1$.
\end{itemize}

By the Hurwitz genus formula, the condition $(3)$ implies that
$g(F_{n})\rightarrow \infty$ for $n\rightarrow \infty$. The tower is
said to be asymptotically good if $\lambda ({\mathcal T})=
{\limsup}_{i\rightarrow\infty}$ $N(F_i)/g(F_i) > 0$, where $N(F_i)$
and $g(F_i)$ denote the number of ${\mathbb F}_{q}$-rational points
and the genus of $F_i$, respectively. In the case of tower of
function field one can replace ${\limsup}_{i\rightarrow\infty}
N(F_i)/g(F_i)$ by ${\lim}_{i\rightarrow\infty} N(F_i)/g(F_i)$,
because the sequence ${(N(F_i)/g(F_i))}_{i\geq 1}$ is convergent. We
say that the tower ${\mathcal T}$ (over ${\mathbb F}_{q}$) attains
the Drinfeld-Vladut bound if $\lambda ({\mathcal T}) =
{\limsup}_{i\rightarrow\infty} N(F_i)/g(F_i)=\sqrt{q}-1$. To
simplify the notation we put $N(F_i)=N_i$ and $g(F_i)=g_i$.

Let ${({\mathcal Q}_{i})}_{i\geq 1}$ be a sequence of quantum codes
over ${\mathbb F}_{q}$ with parameters $[[n_i , k_i ,$
$d_i]{]}_{q}$, respectively. We say that ${({\mathcal
Q}_{i})}_{i\geq 1}$ is asymptotically good if
${\limsup}_{i\rightarrow\infty} k_i/n_i$ $> 0$ and
${\limsup}_{i\rightarrow\infty} d_i/n_i > 0$. The next result shows
how to construct asymptotically good quantum codes derived from
(classical) two-point AG codes.

\begin{theorem}(Two-point asymptotically good codes)\label{main8}
Assume that the tower ${\mathcal T} = (F_1 , F_2, \ldots )$ of
function fields over ${\mathbb F}_{q}$ attains the Drinfeld-Vladut
bound. Then there exists a sequence ${({\mathcal Q}_{i})}_{i\geq 1}$
of asymptotically good quantum codes, over ${\mathbb F}_{q}$,
derived from classical two-point AG codes.
\end{theorem}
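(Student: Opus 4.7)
The plan is to apply Theorem~\ref{main1} with $t = 2$ along the tower, taking the divisor coefficients to be essentially fixed fractions of the length. Since the tower attains the Drinfeld-Vladut bound, $N_i/g_i \to \sqrt{q}-1$, and with $n_i := N_i - 2$ this gives $g_i/n_i \to 1/(\sqrt{q}-1)$. I fix real constants $\alpha, \beta$ with $2/(\sqrt{q}-1) < \alpha < \beta < 1$; such a choice is available provided $\sqrt{q} > 3$, which is automatically the case because attaining the Drinfeld-Vladut bound forces $q$ to be a perfect square with $q \geq 16$.

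For each $F_i$ I reserve two rational places (call them $Q$ and $Q'$, both depending on $i$), form $D_i$ from the remaining $n_i$ rational places, set $a_1^{(i)} = a_2^{(i)} = \lceil \alpha n_i/2 \rceil$ and $b_1^{(i)} = b_2^{(i)} = \lfloor \beta n_i/2 \rfloor$, and take divisors $G_1^{(i)} = a_1^{(i)} Q + a_2^{(i)} Q'$ and $G_2^{(i)} = b_1^{(i)} Q + b_2^{(i)} Q'$. For all sufficiently large $i$ the inequalities $2g_i - 2 < a_1^{(i)} + a_2^{(i)} < b_1^{(i)} + b_2^{(i)} < n_i$ required by Theorem~\ref{main1} hold by the choice of $\alpha, \beta$, so Theorem~\ref{main1} produces a two-point quantum code $\mathcal{Q}_i$ with parameters $[[n_i, k_i, d_i]]_q$ where
\begin{equation*}
k_i = (b_1^{(i)} + b_2^{(i)}) - (a_1^{(i)} + a_2^{(i)}), \qquad d_i \geq \min\bigl\{n_i - (b_1^{(i)} + b_2^{(i)}),\; (a_1^{(i)} + a_2^{(i)}) - (2g_i - 2)\bigr\}.
\end{equation*}
Dividing by $n_i$ and letting $i \to \infty$ gives $k_i/n_i \to \beta - \alpha > 0$ and $\liminf_i d_i/n_i \geq \min\{1 - \beta,\; \alpha - 2/(\sqrt{q}-1)\} > 0$, so the sequence $(\mathcal{Q}_i)_{i \geq 1}$ is asymptotically good.

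The main subtlety is in confirming that the interval $(2/(\sqrt{q}-1), 1)$ is nonempty, i.e., that $q$ is large enough for the Drinfeld-Vladut ratio to leave asymptotic room simultaneously for a positive rate and a positive relative distance. Once this is noted, the rest is bookkeeping: one must verify that the integer roundings $\lceil \cdot \rceil, \lfloor \cdot \rfloor$ preserve the strict divisor-degree inequalities for all but finitely many $i$, and that they do not affect the limit ratios. Both verifications are routine.
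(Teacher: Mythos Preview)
Your argument is correct and follows essentially the same route as the paper: apply the general $t$-point construction (Theorem~\ref{main1}) with $t=2$ at each level of the tower, choose the two divisor degrees as suitable fractions of the length, and pass to the limit using $N_i/g_i \to \sqrt{q}-1$. The paper parametrizes a little differently---it balances the two distance lower bounds by setting $b_1(i)+b_2(i) = \lfloor (N_i + 2g_i + K_i - 4)/2 \rfloor$ and then lets $K_i/N_i \to c$ for any $0<c<1-2/(\sqrt{q}-1)$---but this is cosmetic; your choice of fixed fractions $\alpha,\beta$ achieves the same end. One small caveat: the assertion that attaining the Drinfeld--Vladut bound \emph{forces} $q$ to be a perfect square is not an established theorem; what you actually need, and correctly isolate, is $\sqrt{q}>3$, a restriction the paper also relies on implicitly without comment.
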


\begin{proof}
For each $F_i$, let us consider the set of rational places $P_1(i),
\ldots, P_{N_{i} -2}(i),$ $P_{N_{i}-1}(i), P_{N_{i}}(i)$ of $F_i$.
We set the divisors $D(i)=P_1(i)+ \ldots + P_{N_{i} -2}(i)$, $G_1
(i) = a_1(i)P_{N_{i} -1}(i) + a_2(i)P_{N_{i}}(i)$ and $G_2 (i) =
b_1(i)P_{N_{i} -1}(i) + b_2(i)P_{N_{i}}(i)$, where $a_1(i)\leq
b_1(i)$ and $a_2(i)\leq b_2(i)$, with $2g_{i} - 2 < a_1(i) + a_2(i)<
b_1(i) + b_2(i) < N_{i}-2$. Let $C_{1}(i) := C_{\mathcal L}(i)[D(i),
G_1(i)]$ and $C_{2}(i) := C_{\mathcal L}(i)[D(i), G_2(i)]$ be the
two-point AG codes, over ${\mathbb F}_{q}$, corresponding to $G_1
(i)$ and $G_2 (i)$, respectively; thus $C_{1}(i)\subset C_{2}(i)$.
The code $C_1 (i)$ has parameters $[N_{i}-2, a_1(i) + a_2(i)-g_i +1
, d_1 (i)]_{q}$, where $d_{1}(i) \geq N_{i}-2 - (a_1(i) + a_2(i))$;
the code $C_2(i)$ has parameters $[N_{i}-2, b_1(i) + b_2(i)-g_i +1,
d_2 (i)]_{q}$, where $d_{2}(i) \geq N_{i}-2 - (b_1(i) + b_2(i))$.
The corresponding CSS code has parameters ${[[N_{i}-2, K_i = b_1(i)
+ b_2(i)-( a_1(i) + a_2(i)), D_i]]}_{q}$, where $D_i \geq \min
\{N_{i}-2 - (b_1(i) + b_2(i)), a_{1}(i) + a_{2}(i) -(2g_i - 2)\}$.
We know that the $K_i$'s assume all the values from $1$ to $N_i -
2g_i -2$, i.e., $0 < K_i \leq N_i - 2g_i -2$. For any such $K_i$ we
set $b_1 (i) + b_2 (i) = \lfloor (N_i + 2g_i + K_i - 4)/2 \rfloor$;
thus it follows that $N_{i}-2 - (b_1(i) + b_2(i))\geq a_{1}(i) +
a_{2}(i) -(2g_i - 2)$, where $a_{1}(i) + a_{2}(i) -(2g_i - 2)\geq
(N_i - K_i -2g_i - 1)/2$. The sequence of positive integers
${(K_{i})}_{i\geq 1}$ satisfies $0 < {\limsup}_{i\rightarrow\infty}
\frac{K_i}{N_i-2}\leq {\limsup}_{i\rightarrow\infty}N_{i}/(N_{i}-2)
- {\limsup}_{i\rightarrow\infty} 2g_i/(N_{i}-2)+
{\limsup}_{i\rightarrow\infty} -2/(N_{i}-2)= 1 -2/(\sqrt{q} -1)$,
where in the last equality we use the fact that
${\limsup}_{i\rightarrow\infty} N_{i}/g_{i} = \sqrt{q} -1$. For each
$0 < c < 1 -2/(\sqrt{q} -1)$, we can choose convenient values for
$K_i$ such that ${\lim}_{i\rightarrow\infty} K_{i}/N_{i}= c$. Thus,
${\limsup}_{i\rightarrow\infty} K_{i}/(N_{i}-2)=c >0$. Moreover one
has ${\limsup}_{i\rightarrow\infty} (N_i - K_i -2g_i - 1)/2(N_{i} -
2)= 1/2\left[ 1 -2/(\sqrt{q} -1) -c\right] > 0$. Therefore, there
exists a sequence ${({\mathcal Q}_{i})}_{i\geq 1}$ of asymptotically
good quantum codes over ${\mathbb F}_{q}$. The proof is complete.
\end{proof}

\begin{remark}
Since several works available in the literature already presented
constructions of asymptotically good quantum codes derived from
one-point AG codes (see \cite{ashikhmin:2001,chen:2001,ling:2001}),
we do not present such constructions in this paper.
\end{remark}

\begin{theorem}($t$-point asymptotically good codes)\label{main9}
Assume that the tower ${\mathcal T} = (F_1 , F_2, \ldots )$ of
function fields over ${\mathbb F}_{q}$ attains the Drinfeld-Vladut
bound. Then there exists a sequence ${({\mathcal Q}_{i})}_{i\geq 1}$
of asymptotically good quantum codes, over ${\mathbb F}_{q}$,
derived from classical $t$-point AG codes.
\end{theorem}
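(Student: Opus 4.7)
The plan is to mirror the argument of Theorem~\ref{main8} almost verbatim, replacing the two-point construction by the general $t$-point construction of Theorem~\ref{main1}. Fix $t\geq 1$ once and for all (independent of $i$). For each function field $F_i$ in the tower, enumerate its rational places as $P_1(i),\ldots,P_{N_i-t}(i),P_{N_i-t+1}(i),\ldots,P_{N_i}(i)$, and set
\[
D(i) = P_1(i) + \cdots + P_{N_i-t}(i),\qquad
G_1(i)=\sum_{j=1}^t a_j(i)\,P_{N_i-t+j}(i),\qquad
G_2(i)=\sum_{j=1}^t b_j(i)\,P_{N_i-t+j}(i),
\]
choosing $a_j(i)\leq b_j(i)$ and $2g_i-2 < \sum_{j=1}^t a_j(i) < \sum_{j=1}^t b_j(i) < N_i-t$. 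This makes sense provided $N_i \geq t+1$, which holds for $i$ sufficiently large since $N_i\to\infty$.

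Next I would invoke Theorem~\ref{main1} (the general $t$-point construction) applied to the codes $C_{\mathcal L}(D(i),G_1(i))\subset C_{\mathcal L}(D(i),G_2(i))$. This produces, for each $i$, a quantum code $\mathcal{Q}_i$ with parameters $[[n_i,K_i,D_i]]_q$ where $n_i = N_i-t$, $K_i = \sum_{j=1}^t (b_j(i)-a_j(i))$ and
\[
D_i \geq \min\Bigl\{\, N_i - t - \sum_{j=1}^t b_j(i),\ \sum_{j=1}^t a_j(i) - (2g_i-2)\,\Bigr\}.
\]
As in the two-point proof, balance the two arguments of the $\min$ by setting $\sum_{j=1}^t b_j(i) = \lfloor (N_i + 2g_i + K_i - t - 2)/2 \rfloor$; then $\sum_{j=1}^t a_j(i) - (2g_i-2) \geq (N_i - K_i - 2g_i - t + 1)/2$ and $N_i-t-\sum b_j(i)$ is at least as large.

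Finally, use the Drinfeld--Vladut hypothesis $\lim_{i\to\infty} N_i/g_i = \sqrt q - 1$ to conclude. Since $t$ is a fixed constant, $N_i-t \sim N_i$ and $g_i/(N_i-t)\to 1/(\sqrt q - 1)$. Hence the admissible range for the rate satisfies
\[
\limsup_{i\to\infty} \frac{K_i}{N_i-t} \ \leq\ 1 - \frac{2}{\sqrt q - 1},
\]
and for any $0 < c < 1 - 2/(\sqrt q -1)$ I can pick $K_i$ so that $K_i/(N_i-t) \to c$, which forces
\[
\liminf_{i\to\infty} \frac{D_i}{N_i-t} \ \geq\ \frac{1}{2}\Bigl(1-\frac{2}{\sqrt q - 1} - c\Bigr) \ >\ 0.
\]
Both the rate and the relative distance are thus bounded away from zero, giving the desired asymptotically good sequence $(\mathcal{Q}_i)_{i\geq 1}$.

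The only subtlety — hardly an obstacle — is that the argument needs enough rational places to accommodate the $t$ fixed ``support'' places for $G_1(i),G_2(i)$; this is automatic for large $i$ because $N_i\to\infty$ while $t$ stays constant. Everything else is a constant-shift of the computation already carried out in Theorem~\ref{main8}, and the implicit arithmetic condition $\sqrt q - 1 > 2$ (i.e.\ $q>9$) is inherited from that theorem.
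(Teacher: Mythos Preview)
Your proof is correct and follows essentially the same approach as the paper: the paper sets up exactly the same divisors $D(i)$, $G_1(i)$, $G_2(i)$ with $N_i=n_i+t$, chooses the identical balancing $\sum_{j=1}^t b_j(i)=\lfloor (N_i+2g_i+K_i-t-2)/2\rfloor$, and then simply refers back to the computation in Theorem~\ref{main8}. Your write-up is in fact more detailed than the paper's (which omits the asymptotic estimates and the remarks about $N_i\to\infty$ and $q>9$), but the method is the same.
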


\begin{proof}
We adopt the same notation as in the proof of Theorem~\ref{main8}.
For each $F_i$, let us consider the set of rational places $P_1(i),
\ldots, P_{n_{i}}(i),$ $P_{n_{i}+1}(i), \ldots, P_{n_{i}+t}(i)$ of
$F_i$, where $N_i = n_i + t$. Set $D(i)=P_1(i)+ \ldots +
P_{n_{i}}(i)$, $G_1 (i) = a_1(i)P_{n_{i}+1}(i) +\ldots +
a_t(i)P_{n_{i}+t}(i)$ and $G_2 (i) = b_1(i)P_{n_{i}+1}(i) +\ldots +
b_t(i)P_{n_{i}+t}(i)$, where $a_j(i)\leq b_j(i)$ for all $j=1,
\ldots, t$, with $2g_{i} - 2 < \displaystyle\sum_{j=1}^{t}a_j(i) <
\displaystyle\sum_{j=1}^{t}b_j(i) < N_{i}-t$. Let us consider the
$t$-point AG codes $C_{1}(i) := C_{\mathcal L}(i)[D(i), G_1(i)]$ and
$C_{2}(i) := C_{\mathcal L}(i)[D(i), G_2(i)]$. It follows that
$C_{1}(i)\subset C_{2}(i)$, and $C_1 (i)$ has parameters $\left[
N_{i}-t, \displaystyle\sum_{j=1}^{t}a_j(i)-g_i +1 , d_1
(i)\right]_{q}$, where $d_{1}(i) \geq N_{i}-t -
\displaystyle\sum_{j=1}^{t}a_j(i)$. Moreover, $C_2(i)$ has
parameters $\left[ N_{i}-t, \displaystyle\sum_{j=1}^{t}b_j(i)-g_i
+1, d_2 (i)\right]_{q}$, where $d_{2}(i) \geq N_{i}-t -
\displaystyle\sum_{j=1}^{t}b_j(i)$.

Setting $\displaystyle\sum_{j=1}^{t}b_j(i) = \lfloor (N_i + 2g_i +
K_i - t- 2)/2 \rfloor$ and proceeding similar as in the proof of
Theorem~\ref{main8}, the result follows.
\end{proof}

Let $q$ be a prime power and $C = {[n, k, d]}_{q^{m}}$ be a linear
code over ${\mathbb F}_{q^{m}}$. Let $\beta$ be a basis of ${\mathbb
F}_{q^{m}}$ over ${\mathbb F}_{q}$ and assume also that
${\beta}^{\perp}$ is a dual basis of $\beta$. Let $C^{\perp}$ be the
Euclidean dual of $C$. Then one has ${[\beta (C)]}^{\perp}=
{\beta}^{\perp}(C^{\perp})$ (see \cite{grassl:1999,laguardia:2012}).

\begin{theorem}\label{main10}
For any prime $p$, there exists a sequence ${({\mathcal
Q}_{i})}_{i\geq 1}$ of asymptotically good quantum codes over
${\mathbb F}_{p}$.
\end{theorem}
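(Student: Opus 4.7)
The plan is to reduce the prime case to the square-alphabet case, which was already handled by Theorem~\ref{main8} (or Theorem~\ref{main9}), and then descend to $\mathbb{F}_p$ by the basis-expansion machinery $\beta(\cdot)$ recalled just before the theorem statement. Concretely, for any prime $p$, the Garcia--Stichtenoth tower $\mathcal{T}$ over $\mathbb{F}_{p^{2}}$ attains the Drinfeld--Vladut bound $\lambda(\mathcal T)=p-1$, so Theorem~\ref{main8} produces a sequence $\mathcal{Q}_i^{(0)}$ of asymptotically good quantum codes $[[n_i,K_i,D_i]]_{p^{2}}$ arising from two-point AG codes $C_1(i)\subset C_2(i)$ with $\limsup K_i/n_i>0$ and $\limsup D_i/n_i>0$.

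The next step is to fix a basis $\beta$ of $\mathbb{F}_{p^{2}}/\mathbb{F}_{p}$ with dual basis $\beta^{\perp}$ and form the expanded codes $\beta(C_1(i))\subset\beta(C_2(i))$ over $\mathbb{F}_{p}$. The inclusion is preserved because $\beta$ is $\mathbb{F}_{p}$-linear; the $\mathbb{F}_{p}$-dimensions are $2k_1(i)$ and $2k_2(i)$; and the length is $2n_i$. Applying the CSS construction to this pair produces quantum codes $\mathcal{Q}_i$ with parameters $[[2n_i,\,2K_i,\,D_i']]_{p}$.

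The key step is the distance bound $D_i'\geq D_i$. Because $\beta$ is injective and a single nonzero coordinate of $c\in\mathbb{F}_{p^{2}}^{\,n_i}$ expands to a nonzero pair in $\mathbb{F}_{p}^{\,2}$, one has $\mathrm{wt}(\beta(c))\geq\mathrm{wt}(c)$, and $\beta(C_2(i))\setminus\beta(C_1(i))=\beta(C_2(i)\setminus C_1(i))$. For the dual side one invokes the identity $\bigl[\beta(C)\bigr]^{\perp}=\beta^{\perp}(C^{\perp})$ quoted right before the statement, which turns $\beta(C_1(i))^{\perp}\setminus\beta(C_2(i))^{\perp}$ into $\beta^{\perp}\bigl(C_1(i)^{\perp}\setminus C_2(i)^{\perp}\bigr)$, and the same weight inequality applies to $\beta^{\perp}$. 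Together these imply $D_i'\geq D_i$.

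Finally, one reads off
\[
\limsup_{i\to\infty}\frac{2K_i}{2n_i}=\limsup_{i\to\infty}\frac{K_i}{n_i}>0,\qquad
\limsup_{i\to\infty}\frac{D_i'}{2n_i}\geq\frac{1}{2}\limsup_{i\to\infty}\frac{D_i}{n_i}>0,
\]
so $\{\mathcal{Q}_i\}$ is an asymptotically good sequence of $p$-ary quantum codes, which is exactly what is required. The main obstacle I expect is the careful handling of the dual side under basis expansion: without the identity $[\beta(C)]^{\perp}=\beta^{\perp}(C^{\perp})$, one cannot guarantee a weight bound on $\beta(C_1(i))^{\perp}\setminus\beta(C_2(i))^{\perp}$, and the CSS distance estimate would collapse. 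Everything else, including the preservation of nestedness and the halving of the rate, is immediate from $\mathbb{F}_{p}$-linearity of $\beta$ and the definition of asymptotic goodness used in the paper.
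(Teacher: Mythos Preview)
Your overall strategy---build nested AG codes over a square field from a Drinfeld--Vladut--optimal tower, expand to $\mathbb{F}_p$ via a basis $\beta$, use $[\beta(C)]^\perp=\beta^\perp(C^\perp)$ to control the dual side, then apply CSS---is precisely the route taken in the paper. The handling of the dual, which you flag as the main obstacle, is exactly what the paper isolates too.

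There is, however, a genuine gap: by working specifically over $\mathbb{F}_{p^2}$ you are implicitly invoking Theorem~\ref{main8} with $q=p^2$, and the proof of that theorem needs the interval $0<c<1-2/(\sqrt{q}-1)$ to be nonempty, i.e.\ $\sqrt{q}-1>2$. With $\sqrt{q}=p$ this fails for $p=2$ and $p=3$: the asymptotic distance bound $\tfrac12\bigl(1-2/(p-1)-c\bigr)$ is nonpositive, so you do not actually obtain $\limsup D_i/n_i>0$ for these primes, and the claim ``for any prime $p$'' is not established.

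The paper avoids this by setting $q^2=p^{2r}$ and expanding along a basis of $\mathbb{F}_{p^{2r}}$ over $\mathbb{F}_p$; the final relative distance is then $\tfrac{1}{4r}\bigl(1-2/(p^r-1)-c\bigr)$, which is positive once $r$ is chosen with $p^r>3$. Your argument goes through verbatim with this single modification: replace $\mathbb{F}_{p^2}$ by $\mathbb{F}_{p^{2r}}$ for a suitable $r$, let $\beta$ be a basis of $\mathbb{F}_{p^{2r}}/\mathbb{F}_p$, and carry the factor $2r$ (instead of $2$) through the length and dimension computations.
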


\begin{proof}
Let $q^2 = p^{2r}$, $p$ prime. Let us consider the tower of function
fields ${\mathcal T} = (F_1 , F_2, \ldots )$ over ${\mathbb
F}_{q^{2}}$, shown in \cite{henning:1996}, defined by $F_t =
{\mathbb F}_{q^{2}} (x_1, \ldots , x_t)$, where $x_{i+1}^{q}+
x_{i+1}= x_{i}^{q}/(x_{i}^{q-1}+1)$, for $ i = 1, \ldots t-1$. This
tower attains the Drinfeld-Vladut bound. We next expand the codes
$C_1 (i)$ and $C_2(i)$, shown in the proof of Theorem~\ref{main8},
with respect to some basis $\beta$ of ${\mathbb F}_{q^{2}}$ over
${\mathbb F}_{p}$. Thus, we obtain codes $\beta (C_{1}(i))$ and
$\beta (C_{2}(i))$, both over ${\mathbb F}_{p}$, with parameters
$[2r(N_{i}-2), 2r(a_1(i) + a_2(i)-g_i +1) , \geq d_{1}^{*}
(i)]_{p}$, where $d_{1}^{*}(i)\geq d_{1}(i) \geq N_{i}-2 - (a_1(i) +
a_2(i))$, and $[2r(N_{i}-2), 2r(b_1(i) + b_2(i)-g_i +1),
d_{2}^{*}(i)]_{p}$, where $d_{2}^{*}(i) \geq d_{2}(i) \geq N_{i}-2 -
(b_1(i) + b_2(i))$, respectively. Because $\beta (C_{1}(i))\subset
\beta (C_{2}(i))$, we apply the CSS construction to these codes,
obtaining therefore an ${[[2r(N_{i}-2),  2rK_i, D_i]]}_{p}$ quantum
code, where $D_i \geq \min \{N_{i}-2 - (b_1(i) + b_2(i)), a_{1}(i) +
a_{2}(i) -(2g_i - 2)\}$ (note that since ${[\beta
(C_{1}(i))]}^{\perp}= {\beta}^{\perp}{(C_{1}(i))}^{\perp}$, then the
minimum distance of ${[\beta (C_{1}(i))]}^{\perp}$ is at least
$a_{1}(i) + a_{2}(i) -(2g_i - 2)$). Proceeding similarly as in the
proof of Theorem~\ref{main8}, we get $N_{i}-2 - (b_1(i) +
b_2(i))\geq a_{1}(i) + a_{2}(i) -(2g_i - 2) \geq (N_i - K_i -2g_i +
1)/2$. Consequently, one has ${\limsup}_{i\rightarrow\infty} 2r
K_{i}/2r(N_{i}-2) >0$ and ${\limsup}_{i\rightarrow\infty} (N_i - K_i
-2g_i + 1)/4r(N_i - 2)= 1/4r \left[ 1 -2/(p^r -1) -c \right] > 0$,
as desired.
\end{proof}

\begin{remark}
Although the proofs of Theorems~\ref{main8}~and~\ref{main10} are
similar to the proofs of the corresponding results shown in
references \cite{ling:2001,walker:2008}, in the present paper we
utilize $t$-point ($t\geq 2$) AG codes, whereas in such references,
the authors utilized only one-point AG codes to perform their
constructions. Another difference is that in
\cite{ling:2001,walker:2008}, the authors utilized the technique of
code concatenation to obtain (quantum) codes over prime fields;
here, we utilize the technique of code expansion.
\end{remark}

\section{Examples and Code Comparison}\label{sec4}

In Tables~\ref{table1}, \ref{table2} and \ref{table3}, we exhibit
some new quantum codes derived from
Corollaries~\ref{main2},~\ref{main3}~and~\ref{main7}, respectively.
In these tables, $q$ is a prime power and $a, b, a_1, a_2, b_1 ,$
$b_2 , t_1 , t_2 , m$ are positive integers satisfying some
conditions. More precisely: in Table~\ref{table1}, we consider that
$(q-1)(m-1)-2 < a < b$, $b < q(1 + (q-1)m)$ and $m|(q+1)$; in
Table~\ref{table2}, we assume that $a_i \leq b_i$ $i=1, 2$,
$(q-1)(m-1)-2 < a_1 + a_2 < b_1 + b_2$, $b_1 + b_2 < q[1 +
(q-1)m]-1$ and $m|(q+1)$; in Table~\ref{table3}, we suppose that $1
< t_1 < t_2 < 23$.

\begin{table}[!hpt]
\begin{center}
\caption{New quantum codes\label{table1}}
\begin{tabular}{|c|c|c|c|c|}
\hline New codes from Corollary~\ref{main2} & $q$ & $m$ & $a$ &
$b$\\
\hline ${[[27, 17, d\geq 3]]}_{9}$ & $3$ & $4$ & $7$ & $24$\\
\hline ${[[27, 15, d\geq 4]]}_{9}$ & $3$ & $4$ & $8$ & $23$\\
\hline ${[[27, 13, d\geq 5]]}_{9}$ & $3$ & $4$ & $9$ & $22$\\
\hline ${[[27, 11, d\geq 6]]}_{9}$ & $3$ & $4$ & $10$ & $21$\\
\hline ${[[27, 9, d\geq 7]]}_{9}$ & $3$ & $4$ & $11$ & $20$\\
\hline ${[[27, 7, d\geq 8]]}_{9}$ & $3$ & $4$ & $12$ & $19$\\
\hline ${[[27, 5, d\geq 9]]}_{9}$ & $3$ & $4$ & $13$ & $18$\\
\hline ${[[27, 3, d\geq 10]]}_{9}$ & $3$ & $4$ & $14$ & $17$\\
\hline ${[[27, 1, d\geq 11]]}_{9}$ & $3$ & $4$ & $15$ & $16$\\
\hline ${[[64, 48, d\geq 3]]}_{16}$ & $4$ & $5$ & $13$ & $61$\\
\hline ${[[64, 46, d\geq 4]]}_{16}$ & $4$ & $5$ & $14$ & $60$\\
\hline ${[[64, 44, d\geq 5]]}_{16}$ & $4$ & $5$ & $15$ & $59$\\
\hline ${[[64, 24, d\geq 15]]}_{16}$ & $4$ & $5$ & $25$ & $49$\\
\hline ${[[64, 4, d\geq 25]]}_{16}$ & $4$ & $5$ & $35$ & $39$\\
\hline ${[[64, 2, d\geq 26]]}_{16}$ & $4$ & $5$ & $36$ & $38$\\
\hline ${[[65, 53, d\geq 3]]}_{25}$ & $5$ & $3$ & $9$ & $62$\\
\hline ${[[65, 51, d\geq 4]]}_{25}$ & $5$ & $3$ & $10$ & $61$\\
\hline ${[[65, 49, d\geq 5]]}_{25}$ & $5$ & $3$ & $11$ & $60$\\
\hline ${[[65, 9, d\geq 25]]}_{25}$ & $5$ & $3$ & $31$ & $40$\\
\hline ${[[175, 153, d\geq 3]]}_{49}$ & $7$ & $4$ & $19$ & $172$\\
\hline ${[[175, 151, d\geq 4]]}_{49}$ & $7$ & $4$ & $20$ & $171$\\
\hline ${[[175, 149, d\geq 5]]}_{49}$ & $7$ & $4$ & $21$ & $170$\\
\hline ${[[175, 109, d\geq 25]]}_{49}$ & $7$ & $4$ & $41$ & $150$\\
\hline ${[[175, 31, d\geq 64]]}_{49}$ & $7$ & $4$ & $80$ & $111$\\
\hline ${[[175, 1, d\geq 79]]}_{49}$ & $7$ & $4$ & $95$ & $96$\\
\hline
\end{tabular}
\end{center}
\end{table}

\begin{table}[!hpt]
\begin{center}
\caption{New quantum codes\label{table2}}
\begin{tabular}{|c|c|c|c|c|c|c|}
\hline New codes from Corollary~\ref{main3} & $q$ & $m$ & $a_1$ &
$a_2$ & $b_1$ & $b_2$\\
\hline ${[[26, 16, d\geq 3]]}_{9}$ & $3$ & $4$ & $3$ & $4$ & $7$ & $16$\\
\hline ${[[26, 14, d\geq 4]]}_{9}$ & $3$ & $4$ & $3$ & $5$ & $7$ & $15$\\
\hline ${[[26, 12, d\geq 5]]}_{9}$ & $3$ & $4$ & $3$ & $6$ & $7$ & $14$\\
\hline ${[[26, 4, d\geq 9]]}_{9}$ & $3$ & $4$ & $3$ & $10$ & $7$ & $10$\\
\hline ${[[26, 2, d\geq 10]]}_{9}$ & $3$ & $4$ & $4$ & $10$ & $6$ & $10$\\
\hline
\end{tabular}
\end{center}
\end{table}

\begin{table}[!hpt]
\begin{center}
\caption{New quantum codes\label{table3}}
\begin{tabular}{|c|c|c|c|c|}
\hline New codes from Corollary~\ref{main7} & $q$ & $m$ & $t_1$ &
$t_2$\\
\hline ${[[46, 36, d\geq 4]]}_{25}$ & $5$ & $2$ & $3$ & $21$\\
\hline ${[[46, 32, d\geq 6]]}_{25}$ & $5$ & $2$ & $4$ & $20$\\
\hline ${[[46, 28, d\geq 8]]}_{25}$ & $5$ & $2$ & $5$ & $19$ \\
\hline ${[[46, 4, d\geq 20]]}_{25}$ & $5$ & $2$ & $11$ & $13$\\
\hline
\end{tabular}
\end{center}
\end{table}

Recall that the parameters of an $Q:=[[n, k, d]]_{q}$ quantum code
satisfy the inequality $k + 2d \leq n + 2$. This inequality is
called \emph{quantum Singleton bound} (QSB). The \emph{Singleton
defect} $(SD_{Q})$ of a code is defined as $SD_{Q}= n+2 -k-2d$. In
this paper, we measure the performance of the code by means of the
Singleton defect. We adopt this method because, for large alphabets,
it is difficult to find codes over them: ``... for large $q$, it is
difficult to find explicit known codes to compare with ours since
there are no suitable tables for reference" (see page 3 of
\cite{jin:2014}).

The new ${[[26, 16, d\geq 3]]}_{9}$ code is better than the ${[[26,
14, 3]]}_{9}$ code shown in Ref.~\cite{edel}, because the Singleton
defect of the new $Q_1 :={[[26, 16, d\geq 3]]}_{9}$ code is
$SD_{Q_1}\leq 6$, whereas the Singleton defect of the $Q_2 :={[[26,
14, 3]]}_{9}$ code is $SD_{Q_2}=8$. The new ${[[26, 14, d\geq
4]]}_{9}$ code with Singleton defect at most $8$ is better than the
${[[26, 4, 4]]}_{9}$ code shown in Ref.~\cite{edel}, which has
Singleton defect $16$. The new quantum codes of length $46$ have
Singleton defect at most $4$. Moreover, all new quantum codes of
lengths $n=26$ and $n=27$, exhibited in Table~\ref{table1}, have
Singleton defect at most $6$.

Note that the new ${[[27, 3, d\geq 10]]}_{9}$, ${[[27, 5, d\geq
9]]}_{9}$, ${[[65, 9, d\geq 25]]}_{25}$ and ${[[175, 31, d\geq
64]]}_{49}$ codes have large minimum distances when compared to
their code lengths.

The quantum codes shown in \cite{jin:2014} were constructed over the
field ${\mathbb F}_{q^{2}}$, where $q$ is a power of $2$, whereas in
this paper, we construct quantum codes over ${\mathbb F}_{q}$ for
all prime power $q$. Great part of the codes available in
\cite{jin:2014} were constructed over ${\mathbb F}_{8}$; this fact
does not allow us to compare our codes with the ones shown in
\cite{jin:2014}.

The quantum codes exhibited in \cite{munuera:2016} were constructed
over the fields ${\mathbb F}_{2}$, ${\mathbb F}_{3}$, ${\mathbb
F}_{4}$, ${\mathbb F}_{5}$, ${\mathbb F}_{8}$, ${\mathbb F}_{9}$. In
the present paper, we give examples of quantum codes constructed
over ${\mathbb F}_{9}$, ${\mathbb F}_{16}$, ${\mathbb F}_{25}$,
${\mathbb F}_{49}$. The codes over ${\mathbb F}_{9}$ constructed in
\cite{munuera:2016} are ${[[15, 13, 2]]}_{9}$, ${[[15, 7, 4]]}_{9}$,
${[[15, 5, 5]]}_{9}$, ${[[15, 1, 7]]}_{9}$, ${[[243, 241, 2]]}_{9}$,
${[[243, 219, 6]]}_{9}$, ${[[243, 213, 9]]}_{9}$.

The new codes over ${\mathbb F}_{9}$ shown in Tables~\ref{table1}
and \ref{table2} are ${[[26, 16, d\geq 3]]}_{9}$, $[[26, 14,$ $d\geq
4]]_{9}$, $[[26, 12,$ $d\geq 5]]_{9}$, ${[[26, 4, d\geq 9]]}_{9}$,
${[[26, 2, d\geq 10]]}_{9}$, ${[[27, 17, d\geq 3]]}_{9}$, $[[27,
15,$ $d\geq 4]]_{9}$, ${[[27, 13, d\geq 5]]}_{9}$, ${[[27, 11, d\geq
6]]}_{9}$, ${[[27, 9, d\geq 7]]}_{9}$, $[[27, 7,$ $d\geq 8]]_{9}$,
$[[27, 5,$ $d\geq 9]]_{9}$, ${[[27, 3, d\geq 10]]}_{9}$ and ${[[27,
1, d\geq 11]]}_{9}$. Since the parameters among these codes are
different, we do not perform the comparison.

\section{Final Remarks}\label{sec5}

We have constructed several new families of quantum codes with good
as well as asymptotically good parameters. These new quantum codes
have been obtained by applying the CSS construction to classical
algebraic geometry codes constructed here. Many of these codes have
large minimum distances when compared with their code lengths.
Additionally, they have relatively small Singleton defects.
Moreover, we have shown how to obtain sequences of asymptotically
good quantum codes derived from $t$-point AG codes. Therefore, the
class of algebraic geometry codes is a good source to construct
quantum codes with good or even asymptotically good parameters.

\begin{center}
\textbf{Acknowledgements}
\end{center}
I would like to thank the anonymous referees for their valuable
suggestions that improve significantly the quality of this paper. I
also would like to thank the Editor-in-chief Yaakov S. Weinstein for
his excellent work on the review process. This research has been
partially supported by the Brazilian Agencies CAPES and CNPq.

\small

\end{document}